\newcommand{\argmin}{\mathrm{argmin}}
\newcommand{\argmax}{\mathrm{argmax}}
\newcommand{\RR}{\mathbb{R}}
\newcommand{\CC}{\mathbb C}
\newcommand{\Span}{\mathrm{span}\,}
\newcommand{\SSS}{\mathbb{S}}
\renewcommand{\phi}{\varphi}
\newcommand{\pfc}[2]{\tfrac{\partial}{\partial #2}#1}
\newcommand{\pfcc}[3]{\tfrac{\partial^2}{\partial #2\partial #3}#1}
\newcommand{\df}[2]{\tfrac{d}{d#2} #1}
\newcommand{\sdf}[2]{\scriptscriptstyle\frac{d}{d#2} #1}
\newcommand{\Df}[2]{\tfrac{D}{d#2} #1}
\newcommand{\tm}{d}
\newcommand{\grad}{\mathrm{grad}}
\newcommand{\ip}[1]{\left<#1\right>}
\newcommand{\Exp}{\mathrm{Exp}}
\newcommand{\Log}{\mathrm{Log}}
\newcommand{\lemref}[1]{\text{Lemma~\ref{lem:#1}}}
\newcommand{\tmp}[1]{\tm^{#1}}
\newcommand{\Hess}{H}
\newcommand{\param}[1]{\mathbf{#1}}
\begin{document}

\title{Optimization over Geodesics for Exact Principal Geodesic Analysis
}
\titlerunning{Optimization over Geodesics for Exact Principal Geodesic Analysis}

\author{S. Sommer
\and
F. Lauze
\and
M. Nielsen
}

\institute{S. Sommer (\Envelope) \at
              Dept. of Computer Science, Univ. of Copenhagen, Copenhagen, Denmark\\
              \email{sommer@diku.dk}, 
              Tel.: +4535321400
           \and
F. Lauze \at
              Dept. of Computer Science, Univ. of Copenhagen, Copenhagen, Denmark\\
              \email{francois@diku.dk}
           \and
M. Nielsen \at
              Dept. of Computer Science, Univ. of Copenhagen, Copenhagen, Denmark\\
              Biomediq, Copenhagen, Denmark, 
              \email{madsn@diku.dk}
}

\maketitle

\begin{abstract}
In fields ranging from computer vision to signal processing and statistics, increasing computational power allows a move from classical linear models to models that incorporate non-linear phenomena. This shift has created interest in computational aspects of differential geometry, and solving optimization problems that incorporate non-linear geometry constitutes an important computational task. In this paper, we develop methods for numerically solving optimization problems over spaces of geodesics using numerical integration of Jacobi fields and second order derivatives of geodesic families. As an important application of this optimization strategy, we compute exact Principal Geodesic Analysis (PGA), a non-linear version of the PCA dimensionality reduction procedure. By applying the exact PGA algorithm to synthetic data, we exemplify the differences between the linearized and exact algorithms caused by the non-linear geometry. In addition, we use the numerically integrated Jacobi fields to determine sectional curvatures and provide upper bounds for injectivity radii. 

\keywords{
    geometric optimization,
    principal geodesic analysis, 
    manifold statistics,
    differential geometry, 
    Riemannian metrics
    }
    \subclass{65K10 \and 57R99}
\end{abstract}


\section{Introduction}
\label{introduction}
Manifolds, sets locally modeled by Euclidean spaces, have a long and intriguing
history in mathematics, and topological, differential geometric, and Riemannian
geometric properties of manifolds have been studied extensively. The introduction of
high performance computing in applied fields has widened the use of differential
geometry,
and Riemannian manifolds, in particular, are now used for modeling a range
of problems possessing non-linear structure. Applications include shape modeling 
(complex projective shape spaces
\cite{kendall_shape_1984} and medial representations of surfaces
\cite{blum_transformation_1967,joshi_multiscale_2002}),
imaging (tensor manifolds in diffusion tensor imaging 
\cite{fletcher_principal_2004,fletcher_riemannian_2007,pennec_riemannian_2006} and
image segmentation and registration \cite{caselles_geodesic_1995,pennec_feature-based_1998}),
and several other fields (forestry \cite{huckemann_intrinsic_2010}, human motion
modeling \cite{sminchisescu_generative_2004,urtasun_priors_2005},
information geometry and signal processing \cite{yang_means_2011}).

To fully utilize the power of manifolds in applied modeling, it is essential to develop fast and
robust algorithms for performing computations on manifolds, and, in particular,
availability of methods for solving optimization problems is paramount. In this paper, 
we develop methods for numerically solving optimization problems over spaces of
geodesics using numerical integration of Jacobi fields and second order derivatives of
geodesic families.
In addition, the approach allows numerical approximation of sectional curvatures
and bounds on injectivity
radii \cite{huckemann_intrinsic_2010}. The methods apply to
manifolds represented both parametrically and implicitly without 
preconditions such as knowledge of explicit formulas for geodesics. This fact
makes the approach applicable to a range of applications, and it allows
exploration of the effect of curvature on non-linear statistical
methods.

To exemplify this, we consider the problem of capturing the variation of a set
of manifold valued data with the 
Principal Geodesic Analysis (PGA, \cite{fletcher_principal_2004-1}) non-linear
generalization of Principal Component Analysis (PCA). Until now, there has been no
method for numerically computing PGA for general manifolds without linearizing
the problem. Because PGA can be formulated as an optimization
problem over geodesics, the tools developed here apply to computing
it without discarding the non-linear structure. As a result, the paper 
provides an algorithm for
computing exact PGA for a wide range of manifolds.

\subsection{Related Work}
\label{sec:related}
A vast body of mathematical literature describes manifolds and Riemannian
structures; 
\cite{do_carmo_riemannian_1992,lee_riemannian_1997} provide
excellent introductions to the field. From an applied point of view, the papers
\cite{dedieu_symplectic_2005,herbert_bishop_keller_numerical_1968,noakes_global_1998,klassen_geodesics_2006,schmidt_shape_2006,sommer_bicycle_2009}
address first-order problems such as computing geodesics and solving the
exponential map inverse problem, the logarithm map.
Certain second-order problems including computing Jacobi fields on
diffeomorphism groups \cite{younes_evolutions_2009,ferreira_newton_2008}
have been considered but mainly on limited classes of manifolds. 
Different aspects of numerical computation on implicitly defined manifolds are covered in
\cite{zhang_curvature_2007,rheinboldt_manpak:_1996,rabier_computational_1990},
and generalizing linear statistics to manifolds has been the focus of the papers
\cite{karcher_riemannian_1977,pennec_intrinsic_2006,fletcher_robust_2008,fletcher_principal_2004-1,huckemann_intrinsic_2010}.

Optimization problems can be posed on a manifold in the sense that the domain of
the cost function is restricted to the manifold.
Such problems are extensively covered in the literature 
(e.g. \cite{luenberger_gradient_1972,yang_globally_2007}). In contrast, this
paper concerns
optimization problems over geodesics with the complexity residing in 
the cost functions and not the optimization
domains.

The manifold generalization of linear PCA, PGA, was first introduced in
\cite{fletcher_statistics_2003} but it was formulated in the form most widely used in
\cite{fletcher_principal_2004-1}. It has subsequently been used for several
applications. To mention a few, the authors in 
\cite{fletcher_principal_2004-1,fletcher_principal_2004} study
variations of medial atoms, \cite{wu_weighted_2008} uses a variation
of PGA for facial classification, \cite{said_exact_2007} presents examples on
motion capture data, and \cite{sommer_bicycle_2009} applies PGA to vertebrae
outlines. 
The algorithm presented in \cite{fletcher_principal_2004-1} for computing PGA with tangent space
linearization is most widely used. 
In contrast, \cite{said_exact_2007} computes PGA as
defined in \cite{fletcher_statistics_2003} without approximations, exact PGA, on
a particular 
manifold, the Lie group $\mathrm{SO}(3)$. The paper
\cite{sommer_manifold_2010} uses the methods presented here to experimentally
assess the effect of tangent space linearization
on high dimensional manifolds modeling real-life
data.

A recent wave of interest in manifold valued statistics 
has lead to the development of Geodesic PCA (GPCA, 
\cite{huckemann_intrinsic_2010})
and Horizontal Component Analysis (HCA, \cite{sommer_horizontal_2013}).
GPCA is in many respects close to PGA but
GPCA optimizes for the placement of the center point
and minimizes projection residuals along geodesics. HCA builds low-dimensional orthogonal
decompositions in the frame bundle of the manifold that project back to
approximating subspaces in the manifold.

\subsection{Content and Outline}
The paper presents two main contributions: (1) how numerical integration of 
Jacobi fields and second order derivatives can be used to solve optimization problems over
geodesics; and (2) how the approach allows numerical computation of exact PGA. In addition,
we use the computed Jacobi fields to numerically approximate 
geometric properties such as sectional curvatures.
After a brief discussion of the geometric background, explicit differential equations
for computing Jacobi fields and second derivatives of geodesic families are
presented in Section~\ref{sec:differentials}.
The actual derivations are performed in the appendices due to
their notational complexity. In Section~\ref{sec:pga}, the
exact PGA algorithm is developed. We end the paper with experiments that
illustrate the effect of curvature on the non-linear statistical method 
and with estimation of sectional curvatures and injectivity radii bounds.

The importance of curvature computations is noted in
\cite{huckemann_intrinsic_2010}, which lists the ability to compute sectional
curvature as a high importance open problem. The paper presents a partial
solution to this problem: we discuss how sectional curvatures can be determined
numerically when either a parametrization or implicit representation is available.

In the experiments, we evaluate how the differences between the exact
and linearized PGA depend on
the curvature of the manifold. This experiment, which to the best of our
knowledge has not been made before, is made possible by the
generality of the optimization approach that makes the algorithm 
applicable to a wide range of manifolds with varying curvature.

\section{Background}
\label{sec:geom-notation}
This section will include brief discussions of relevant aspects 
of differential and Riemannian geometry. We keep the notation close to the
notation used in the book \cite{do_carmo_riemannian_1992}; 
see in addition Appendix~\ref{app:notation}.

\subsection{Manifolds and Their Representations}
\label{sec:rep}
In the sequel, $M$ will denote a Riemannian manifold of finite dimension $\eta$. We will need $M$ 
to be sufficiently smooth, i.e. of class $C^k$ for $k=3$ or $4$ depending on the
application. For concrete computational applications, we will represent 
$M$ either using \emph{parametrizations} or \emph{implicitly}.
A local parametrization is a map $\param{x}\in C^k(U,M)$ from an open subset
$U\subset\RR^\eta$ to $M$. With an implicit representation,
$M$ is represented as a level set of a differentiable map $F:\RR^m\rightarrow\RR^n$, e.g.
$M=F^{-1}(0)$. If the Jacobian matrix 
$DF$ has full rank $n$ everywhere on $M$, $M$ will be an $(m-n)$-dimensional
manifold. The space $\RR^m$ is called the embedding space.
When dealing with implicitly defined manifolds, we let $m$ and $n$ 
denote the dimension of the domain and codomain of $F$, respectively, so that
the dimension $\eta$ of the manifold equals $m-n$.
Examples of applications using implicit representations include
shape and human poses models \cite{sommer_bicycle_2009,hauberg_natural_2012}, and
several shape models use parametric representations
\cite{joshi_multiscale_2002,klassen_analysis_2004}.\footnote{Other 
representations include discrete triangulations used for surfaces and
quotients $\tilde{M}/G$ of a larger manifold $\tilde{M}$ by a group $G$. The
latter is for example the case for Kendall's shape-spaces $\Sigma_d^k$
\cite{kendall_shape_1984}. Kendall's shape-spaces for planar points are actually
complex projective spaces $\CC P^{k-2}$ for which parameterizations are
available, and, for points in $3$-dimensional space and higher, the shape-spaces
are anomalous and not manifolds. The spaces studied in
\cite{huckemann_intrinsic_2010} belong to this class.}

\subsection{Geodesic Systems}
\label{sec:geodesic-systems}
Given a local parametrization $\param{x}:U\rightarrow M$, a curve $\alpha_t$ on
$M$ is a geodesic if the curve $x_t$ in $U$ representing $\alpha_t$, i.e.
$\param{x}^{-1}\circ\alpha_t=x_t$, satisfies the ODE
\begin{equation}
        \ddot{x}_t^k=-\sum_{i,j}^\eta\Gamma_{ij}^k(x_t)\dot{x_t}^i\dot{x_t}^j,\
        k=1,\ldots,\eta \ .
    \label{eq:param-geo}
\end{equation}
Here $\Gamma_{ij}^k$ denotes the Christoffel symbols of the Riemannian metric.
Conversely, geodesics can be found by solving the ODE with a starting point
$x_0=q$ and initial velocity $\dot{x}_0=v$ as initial conditions. 
The exponential map $\Exp_q v$ maps the initial point $q\in M$ and velocity
$v\in T_q M$ to 
$\alpha_1$, the
point on the geodesic at time $t=1$.
When defined, the logarithm map $\Log_q y$ is the inverse of
$\Exp_q$, i.e. $\Exp_q\Log_q y=y$.
For implicitly represented manifolds, the classical ODE describing geodesics is 
not directly usable because neither parametrizations nor Christoffel symbols 
are directly available. Instead, the geodesic with initial point $q$ and
initial velocity $v$ can be found as the $x$-part of the solution 
of the IVP
\begin{equation}
    \begin{split}
        &\dot{p}_t=-\left(\sum_{k=1}^n\mu^k(x_t,p_t)\Hess_{x_t}(F^k)\right)\dot{x}_t\ ,\\
        &\dot{x}_t=\left(I-D_{x_t}F^\dagger D_{x_t}F\right)p_t\ ,\\
        &x_0=q,\ p_0=v \ ,
    \end{split}
    \label{sys:impl-geo}
\end{equation}
see \cite{dedieu_symplectic_2005}. Note that $x_t$ is a curve in the embedding
space $\RR^m$ but since $M$ is a subset of the embedding space and the starting
point $q$ is in $M$, $x_t$ will stay in $M$ for all $t$.
Recall that $F:\RR^m\rightarrow\RR^n$ is the map defining the manifold by
e.g. $M=F^{-1}(0)$ and that $\Hess(F^k)$ denotes the Hessian of the $k$th
component of $F$. $F$ is map between Euclidean spaces and the Hessian is
therefore the ordinary Euclidean Hessian matrix. 
The map $\mu:\RR^m\times\RR^m\rightarrow\RR^n$ is defined by
$(x,p)\mapsto-(D_{x}F^T)^\dagger p$ where
the symbol $DF^\dagger$ denotes the generalized inverse or pseudo-inverse
of the non-square matrix $DF$. Since $DF$ has full-rank $n$, $DF^\dagger$ equals
$DF^T(DFDF^T)^{-1}$. Numerical stability of the geodesic
system is treated in \cite{dedieu_symplectic_2005}.

\subsection{Geodesic Families and Variations of Geodesics}
In the next sections, we will treat optimization problems over geodesics of which 
the PGA problem \eqref{eq:pga-cost} constitute a concrete example;
in addition, problems such as geodesic regression \cite{fletcher_geodesic_2011} and
manifold total least squares belong to this class.
For this purpose, we here recall the close connection between variations of geodesics,
Jacobi fields, and the differential $\tm\Exp$. 
Let $\alpha_{t,s}$ be a family of geodesics parametrized by $s$, i.e. for each
$\tilde{s}$, the curve $t\mapsto \alpha_{t,\tilde{s}}$ is a geodesic. 
By varying the parameter $s$, a vector field $\df{\alpha_{t,0}}{s}$ is obtained.\footnote{
Recall that $\df{\alpha_{t,0}}{s}$ is a shorthand for
$\df{\alpha_{t,s}}{s}|_{s=0}$, see Appendix~\ref{app:notation}.}
These \emph{Jacobi fields} are uniquely determined by 
the initial conditions $J_0$ and $\Df{J_0}{t}$, the variation of the initial
points $x_{0,s}$ and the covariant derivative of the field at $t=0$, respectively. Define 
$q_s=x_{0,s}$, $v_s=\dot{x}_{0,s}$, and
$w=\df{v_0}{s}$. If $\df{q_0}{s}=J_0$ and $w=\Df{J_0}{t}$ then 
$\df{\Exp_{q_s}(tv_s)}{s}|_{s=0}$ equals $J_t$ \cite[Chap.
5]{do_carmo_riemannian_1992}. When $q_s$
is constant, i.e. $q_s=q$, we have the following connection between $J_t$ and
the differential $\tm\Exp$:
\begin{equation}
    \tm_{tv_0}\Exp_{q}tw
    =
    J_t
    \ .
    \label{eq:jacobi-dexp}
\end{equation}
Jacobi fields can equivalently be defined as solutions to an ODE that involves
the curvature endomorphism of the manifold \cite[Chap.
5]{do_carmo_riemannian_1992}.  However, the curvature
endomorphism is not easily computed when the manifold is represented implicitly,
and, therefore, the ODE is hard to use for computational applications in this
case. In the next section, we
numerically compute Jacobi fields by integrating the
differential of the system \eqref{sys:impl-geo}.
\begin{figure}[h]
    \begin{center}
      \includegraphics[width=0.50\columnwidth,trim=30 20 10 10,clip=true]{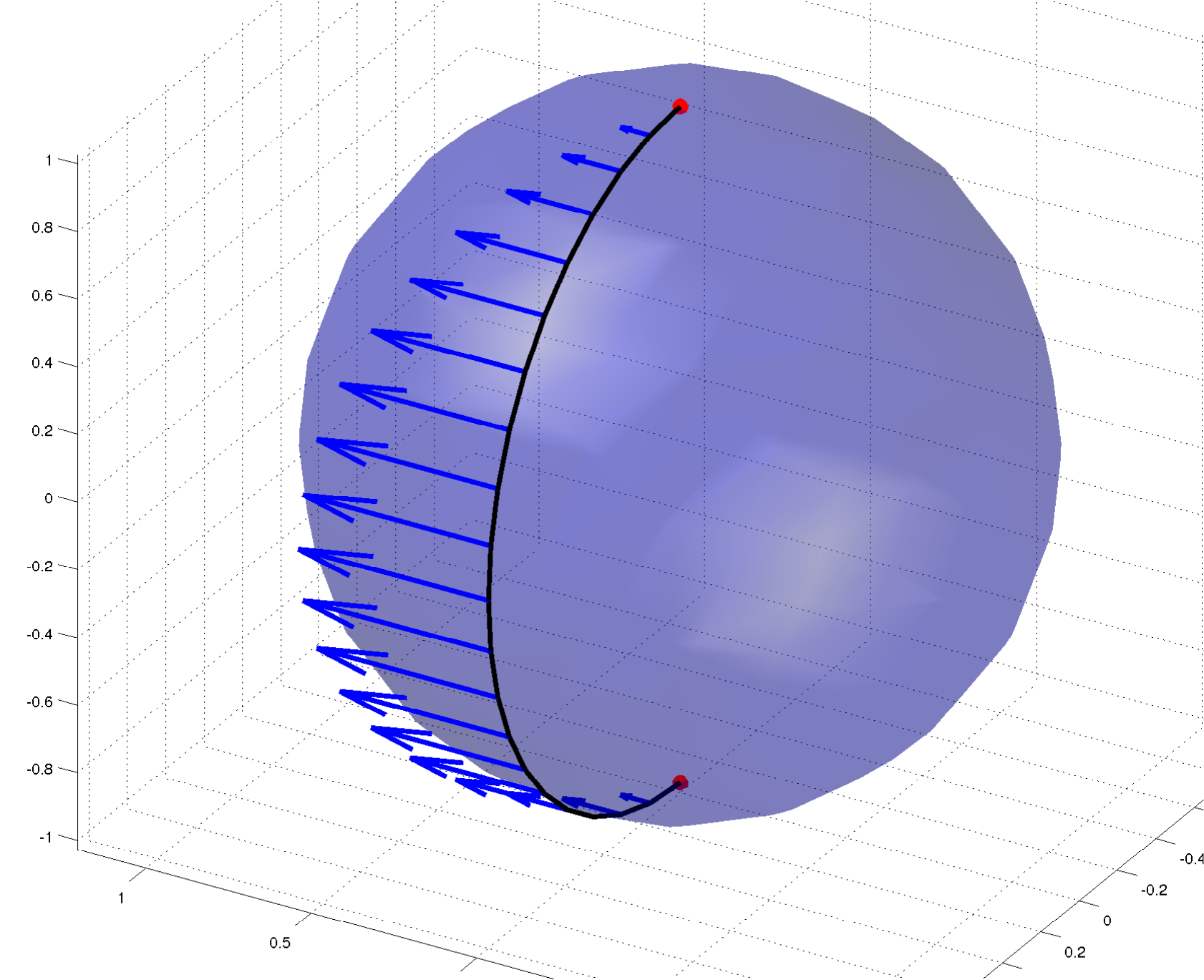}
    \end{center}
    \caption{The sphere $\SSS^2$ with a Jacobi field along a geodesic connecting
    the poles. Each pole is a conjugate point to the other since the non-zero Jacobi
    field vanishes. The injectivity radius is equal to the length of the
    geodesic, $\pi$.}
    \label{fig:sphere-jacobi}
\end{figure}

Jacobi fields can be used to retrieve various geometric information e.g.
sectional curvature. Let $J_t$ denote a Jacobi field along the geodesic $\alpha_t$ with
$J_0=0$ and derivative $w=\Df{J_0}{t}$. Assume the vectors $v_0=\dot{\alpha}_0$ and $w$ 
are orthonormal. These vectors define
a plane $\sigma=\Span\{v_0,w\}$ in $T_{\alpha_0}M$, and $K_{\alpha_0}(\sigma)$ denotes the sectional
curvature of the plane $\sigma$. Because $K_{\alpha_0}(\sigma)$ occurs in a Taylor expansion of the length $\|J_t\|$,
the sectional curvature can be estimated by
\begin{equation}
    K_{\alpha_0}(\sigma)
    \approx
    \frac{6}{t^3}(t-\|J(t)\|)
    \label{eq:sec-curve}
\end{equation}
for small $t$.
Furthermore, if $J_t$ is a non-zero Jacobi field with $J_0=0$ along a geodesic
$\alpha_t$ and, for some
$\tilde{t}>0$, also $J_{\tilde{t}}=0$ then $\alpha_{\tilde{t}}$ is called a
conjugate point to $\alpha_0$. This
can provide an upper bound on the injectivity radius of $M$, that, in
general terms, specifies the minimum length of non-minimizing
geodesics. Figure~\ref{fig:sphere-jacobi} illustrates the situation on the
sphere $\SSS^2$. We will explore both these points in the experiments section.

\subsection{Principal Geodesic Analysis}
\label{sec:pga-def}
Principal Component Analysis (PCA) is widely used to model the
variability of data in Euclidean spaces. The procedure provides linear
dimensionality reduction by
defining a sequence of linear subspaces maximizing the variance of the projection
of the data to the subspaces or, equivalently, minimizing the reconstruction errors. The $k$th 
subspace is spanned by an orthogonal
basis $\{v^1,\ldots,v^k\}$ of principal components $v^1,\ldots,v^k$, and
the $i$th principal component is defined recursively by
\begin{equation}
    v^i
    =
    \argmax_{\|v\|=1}
    \frac{1}{N}
    \sum_{j=1}^N
    \left(\ip{x_j,v}^2+\sum_{l=1}^{i-1} \ip{x_j,v^l}^2\right)
    \label{eq:pca-cost}
\end{equation}
when formulated as to maximize the variance of the projection of the dataset
$\{x_1,\ldots,x_N\}$ to the subspaces $\Span\{v^1,\ldots,v^{i-1}\}$.

PCA is dependent on the vector space structure of the Euclidean space and hence cannot be performed 
on manifold valued datasets. Principal Geodesic
Analysis was developed to overcome this limitation. 
PGA finds geodesic subspaces centered at point $\mu\in M$ with $\mu$ usually being
an intrinsic mean\footnote{
The notion of intrinsic mean goes back to Fr\'echet \cite{frechet_les_1948} 
and Karcher \cite{karcher_riemannian_1977}.
As in \cite{fletcher_principal_2004-1}, an intrinsic mean is here a minimizer of
$\argmin_{\mu\in M}\sum_{j=1}^Nd(\mu,x_j)^2$. Uniqueness issues are treated in 
\cite{karcher_riemannian_1977}.
}
of the dataset $\{x_1,\ldots,x_N\}$, $x_j\in M$.
The $k$th geodesic subspace $S_k$ of $T_\mu M$ is defined as
$\Exp_\mu(V_k)$ with $V_k=\Span\{v^1,\ldots,v^k\}$ 
being the span of the principal directions $v^1,\ldots,v^k$ defined recursively by
\begin{equation}
    \begin{split}
    &
    v^i
    =
    \argmax_{\|v\|=1,v\in V_{i-1}^\perp}
    \frac{1}{N}
    \sum_{j=1}^Nd(\mu,\pi_{S_v}(x_j))^2
    \ ,\\
    &S_v
    =
    \Exp_\mu(\Span\{V_{i-1},v\})
    \ .
    \end{split}
    \label{eq:pga-cost}
\end{equation}
The projection
$\pi_S(x)$ of a point $x\in M$
onto a geodesic subspace $S=\Exp_qV$ is
\begin{equation}
    \begin{split}
    \pi_S(x)
    &=
    \argmin_{y\in S}d(x,y)^2
    =
    \argmin_{y\in S}\|\Log_yx\|^2
    \\
    &=
    \Exp_q(\argmin_{w\in V}\|\Log_{\Exp_qw}x\|^2)
    \ .
    \end{split}
    \label{eq:proj-cost}
\end{equation}
The term being maximized in \eqref{eq:pga-cost} is the sample variance of the projected data, the expected value of the
squared distance to $\mu$, and PGA therefore extends PCA by finding \emph{geodesic subspaces} in 
which variance is maximized.\footnote{
A slightly different definition that uses 
one-dimensional subspaces and Lie group structure was introduced
in \cite{fletcher_statistics_2003}.
}

Since both optimization problems \eqref{eq:pga-cost} and
\eqref{eq:proj-cost} are difficult to optimize, PGA has traditionally
been computed using the orthogonal projection in the tangent space of $\mu$
to approximate the true projection. With this approximation,
equation \eqref{eq:pga-cost} simplifies to
\begin{equation*}
    v^i
    \approx
    \argmax_{\|v\|=1}
    \frac{1}{N}
    \sum_{j=1}^N
    \left(\ip{\Log_\mu x_j,v}^2+\sum_{l=1}^{i-1} \ip{\Log_\mu x_j,v^l}^2\right)
\end{equation*}
which is equivalent to \eqref{eq:pca-cost}, and, therefore,
the procedure amounts to performing regular PCA on the vectors $\Log_\mu x_j$.
We will refer to PGA with the approximation
as \emph{linearized} PGA, and, following \cite{said_exact_2007}, PGA as 
defined by \eqref{eq:pga-cost} will be referred to
as \emph{exact} PGA.\footnote{In \cite{said_exact_2007},
the fact that $\pi_S$ has a closed form
solution on the sphere $\SSS^3$ when $S$ is a one-dimensional geodesic subspace is used to 
iteratively compute
PGA with the \cite{fletcher_statistics_2003} definition.}
The ability to iteratively solve optimization problems over
geodesics that we will develop in the next sections will allow us to
optimize \eqref{eq:pga-cost} and hence numerically compute exact PGA.

In general, PGA might not be well-defined as the intrinsic mean might not be
unique and both existence and uniqueness may fail for the 
projections \eqref{eq:proj-cost} and the optimization problem \eqref{eq:pga-cost}. The convexity
bounds of Karcher \cite{karcher_riemannian_1977} ensures uniqueness of the mean
for sufficiently local data but setting up sufficient conditions to ensure
well-posedness of \eqref{eq:proj-cost} and \eqref{eq:pga-cost}
for general manifolds is difficult
because they depend on the global geometry of the manifold.

There is ongoing discussion of when principal components should be constrained to
pass the intrinsic mean as in PGA or if other types of means should be used, see 
\cite{huckemann_intrinsic_2010} with discussions. 
In Geodesic PCA \cite{huckemann_intrinsic_2010}, the principal geodesics do not necessarily pass the intrinsic
mean, and similar optimization that allows the PGA base point to move away from the
intrinsic mean can be carried
out with the optimization approach used in this paper.
PGA can also be modified by replacing maximization of sample variance by 
minimization of reconstruction error.
This alternate definition is not equivalent to the definition above, a fact that
again underlines the difference between the Euclidean and the curved
situation. We will illustrate differences between the formulations in the
experiments section but we mainly use the variance formulation \eqref{eq:pga-cost}.

\section{Optimization over Geodesics}
\label{sec:differentials}
Equation \eqref{eq:pga-cost} and \eqref{eq:proj-cost} defining 
PGA are examples of optimization problems over geodesics that in those cases are
represented by their starting point $\mu$ and initial velocity $v$.
More generally, we here consider problems
\begin{equation}
  \min_{(q,v)\in (M,T_q M)}F(\Exp_q v)
  \label{eq:geo-optim}
\end{equation}
where $F:M\rightarrow\RR$ is a function defining the cost of the geodesic
$\Exp_q tv$ here at time $t=1$.\footnote{Even more generally, $F$
can be a function of the entire curve $\Exp_q tv$, $t\in\RR$ instead of just
for the point $\Exp_q tv$, $t=1$ Note that for PGA, the initial velocity is in addition
constrained to subspaces of $T_q M$.}. In order to iteratively solve
optimization problems of the form \eqref{eq:geo-optim}, we will need 
derivatives of $\Exp_q v$ since
$d F(\Exp_q v)=d_yF\,d \Exp_q v$
with $y=\Exp_q v$. Thus, we wish to compute the differential of $\Exp_q v $
with respect to initial point $q$ and initial velocity $v$.
Since \eqref{eq:pga-cost} is a function of the
projection $\pi_S$ given by \eqref{eq:proj-cost}, we will later see that we need the second 
order differential of $\Exp$ as well.

Only in specific cases where explicit expressions for geodesics are available
can the above mention differentials be derived in closed form. Instead, for general
manifolds, the ODEs \eqref{eq:param-geo} and \eqref{sys:impl-geo} describing geodesics 
can be differentiated giving systems that can be numerically integrated to
provide the differentials.
This approach relies on the fact that sufficiently smooth initial value problems
(IVPs) are
differentiable with respect to their initial values, see e.g. \cite[Chap.
I.14]{hairer_solving_2008}.

We will here derive explicit expressions for IVPs describing the 
differential of the exponential map and Jacobi fields. In addition, we 
will differentiate the IVPs a second time. The concrete expressions will allow
the IVPs to be used for iterative optimization of problems on the form
\eqref{eq:geo-optim}. In particular, they will be used for the
exact PGA algorithm presented in the next section. 
The basic strategy is simple: we differentiate the geodesic systems of
Section~\ref{sec:geodesic-systems}. Though the resulting equations are
notationally complex, 
their derivation is in principle just repeated application of the
chain and product rules for differentiation.
MATLAB code for numerical integration of the systems is available at
\url{http://image.diku.dk/sommer}.

Since the geodesic equations \eqref{sys:impl-geo} contain the
generalized inverse of the Jacobian matrix $DF$, we will use the following
formula for derivatives of generalized inverses. When an $n\times m$ matrix $A_s$ depends
on a parameter $s$ and has full rank $n$, and if
its generalized inverse $A_s^\dagger$ is differentiable,
then the derivative $\df{(A_s^\dagger)}{s}$ is given by 
\begin{equation}
  \df{(A_s^\dagger)}{s}
  =
  -A_s^\dagger \big(\df{A_s}{s}\big)A_s^\dagger
  +\big(I-A_s^\dagger A_s\big)\big(\df{A_s}{s}^T\big)\big(A_s^\dagger\big)^TA_s^\dagger A_s \ .
  \label{eq:decell-expr}
\end{equation}
This result was derived in \cite{decell_derivative_1974,golub_differentiation_1973}
and \cite{hanson_extensions_1969} for the full-rank case.
We will apply \eqref{eq:decell-expr} with $A_s=D_{x_{t,s}}F$ when $x_{t,s}$ is an $s$ dependent
family of curves in the embedding space $\RR^m$ that are geodesics on $M$ and
when $t$ is fixed. To see 
that $D_{x_{t,s}}F^\dagger$ is
differentiable with respect to $s$ when $x_{t,s}$ depends smoothly on $s$, take a 
frame of the normal space to $M$ in a neighborhood of $x_{t,s}$, and note that 
$D_{x_{t,s}}F^\dagger$ is a composition of a invertible map onto the frame 
depending smoothly on $s$ and the frame itself. 

The explicit expressions for the differential equations are notationally
heavy. Therefore, we only
state the results here and postpone the actual derivation 
to Appendix~\ref{app:A}.
\begin{em}
  \vspace{0.5em}
  \\
    Let $M\subset\RR^m$ be defined as a regular zero
    level set of a $C^3$ map $F:\RR^m\rightarrow\RR^n$. Using the embedding, we
    identify curves in $M$ and vectors in $TM$ with curves and vectors in
    $\RR^m$. Let $x_t$ be a geodesic with $x_0=q$ and $\dot{x}_0=v$.
    The Jacobi field $J_t$ along $x_t$ with $J_0=u$ and
    $\Df{J_0}{t}=w$ can then be found as the $z$-part of the solution of the IVP
    \begin{equation}
        \begin{split}
            &
            \begin{pmatrix}
                \dot{y}_t\\
                \dot{z}_t
            \end{pmatrix}
            =
            F_{q,v}^I\left(t,
            \begin{pmatrix}
                y_t\\
                z_t
            \end{pmatrix}
            \right)
            \ ,\\
            &
            \begin{pmatrix}
                y_0\\
                z_0\\
            \end{pmatrix}
            =
            \begin{pmatrix}
                w\\
                u
            \end{pmatrix}
            \ ,
        \end{split}
        \label{sys:JI}
    \end{equation}
    with $F_{q,v}^I$ the map given in explicit form in Appendix~\ref{app:A}.
\end{em}\vspace{0.5em}\\
As previously noted, Jacobi fields can be described using an ODE incorporating
the curvature endomorphism in the parameterized case. We can, however, apply a
procedure similar to the implicit case and derive and IVP by differentiating
the geodesic system \eqref{eq:param-geo}. We will use the resulting IVP
\eqref{sys:JP} when working
with variations of geodesics in the parameterized case, see
Appendix~\ref{app:A}.

The systems \eqref{sys:JI} and \eqref{sys:JP} are both linear in the initial values $(w\ u)^T$ 
as expected of systems describing differentials. They are non-autonomous due to the
dependence on the position on the curve $x_t$.

Recall the equivalence \eqref{eq:jacobi-dexp} between Jacobi fields 
and $\tm\Exp$: if $(y_t,z_t)$ satisfy \eqref{sys:JI}
    (or \eqref{sys:JP}) with initial values $(w,0)^T$ then $\tm_v\Exp_q w$ is
    equal to $z_1$.
Therefore, we can compute the differential $\tm_v\Exp_q$ with respect to $v$ by
numerically integrating the system using a
basis $\{w^1,\ldots,w^\eta\}$ for the tangent space $T_qM$ at $q\in M$. With
initial conditions $(0,u)^T$ instead, we can similarly compute the derivative
with respect to the initial point $q$. Note 
that $\Exp_q\Log_q y=y$
implies that $\tm_y\Log_q=(\tm_{\Log_qy}\Exp_q)^{-1}$, a fact that allows the
computation of $\tm_y\Log_q$ as well.

Assuming the manifold is sufficiently smooth, we can differentiate the systems 
\eqref{sys:JI} and \eqref{sys:JP} once more and thereby obtain second order
information that we will need later. The main difficulty is
performing the algebra of the already complicated expressions for the systems,
and, for the implicit case, we will need second
order derivatives of the generalized inverses $D_{x_{t,s}}F^\dagger$.
For simplicity, we consider a families of geodesics with stationary initial
point. The derivations are
again postponed to Appendix~\ref{app:A}.
\begin{em}
  \vspace{0.5em}
  \\
    Let $M$ be of class $C^4$, and let $\alpha_{t,s}$ be a family of 
    geodesics. Assume $\param{x}:U\rightarrow M$ is a local parametrization
    containing $\alpha_{t,s}$, and let $x_{t,s}$
    be the curve in $U$ representing $\alpha_{t,s}$, i.e.
    $\param{x}^{-1}\circ\alpha_{t,s}=x_{t,s}$. Let $w\in T_qM$ with $\alpha_{0,s}=q$
    and $v_s=\dot{\alpha}_{0,s}$. Define $u=\df{v_0}{s}$, and let
    $V_{q,v_0,w,u}=\df{\left(d_{v_s}\Exp_qw\right)}{s}
    =\df{\Big(\df{\left(\Exp_qv_s+rw\right)}{r}\Big)}{s}$.
    Then, in coordinates defined by $\param{x}$, $V_{q,v_0,w,u}$ can be found as the 
    $r$-part of the solution of the IVP
    \begin{equation}
        \begin{split}
            &
            \begin{pmatrix}
                \dot{q}_t\\
                \dot{r}_t
            \end{pmatrix}
            =
            G_{q,v_0,w,u}^P\left(t,
            \begin{pmatrix}
                q_t\\
                r_t
            \end{pmatrix}
            \right)
            \ ,\\
            &
            \begin{pmatrix}
                q_0\\
                r_0\\
            \end{pmatrix}
            =
            \begin{pmatrix}
                0\\
                0
            \end{pmatrix}
            \ ,
        \end{split}
        \label{sys:JP2}
    \end{equation}
    with $G_{q,v_0,w,u}^P$ the map given in explicit form in Appendix~\ref{app:A}.

    Now, let instead $M\subset\RR^m$ be defined as a regular zero
    level set of a $C^4$ map $F:\RR^m\rightarrow\RR^n$.
    Then $V_{q,v_0,w,u}$ can be found as the $r$-part of the solution of the IVP
    \begin{equation}
        \begin{split}
            &
            \begin{pmatrix}
                \dot{q}_t\\
                \dot{r}_t
            \end{pmatrix}
            =
            G_{q,v_0,w,u}^I\left(t,
            \begin{pmatrix}
                q_t\\
                r_t
            \end{pmatrix}
            \right)
            \ ,\\
            &
            \begin{pmatrix}
                q_0\\
                r_0\\
            \end{pmatrix}
            =
            \begin{pmatrix}
                0\\
                0
            \end{pmatrix}
            \ ,
        \end{split}
        \label{sys:JI2}
    \end{equation}
    with $G_{q,v_0,w,u}^I$ the map given in explicit form in Appendix~\ref{app:A}.
\end{em}\vspace{0.5em}\\
We note that solutions to \eqref{sys:JP2} and \eqref{sys:JI2} depend linearly on $u$ even
though the systems themselves are not linear.

\subsection{Numerical Considerations}
The geodesic systems \eqref{eq:param-geo} and \eqref{sys:impl-geo} can in both 
the parametrized and implicit case be expressed
in Hamiltonian forms. In \cite{dedieu_symplectic_2005}, the authors use this
property along with symplectic numerical integrators to ensure 
the computed curves will be close to actual geodesics. This is possible since the Hamiltonian encodes
the Riemannian metric. The usefulness of pursuing a similar approach of 
expressing the differential systems in
Hamiltonian form and using symplectic integrators to preserve the Hamiltonians
is limited since there is no
direct interpretation of such Hamiltonians in contrast to the case for 
geodesic systems.

Along the same lines, we would like to use the preservation of quadratic
forms for symplectic integrators \cite{hairer_geometric_2002} to preserve
quadratic properties of the differential of the exponential map, e.g. the Gauss
Lemma \cite{do_carmo_riemannian_1992}. We are currently
investigating numerical schemes that could possibly ensure such stability.

\section{Exact Principal Geodesic Analysis}
\label{sec:pga}
As an example of how the IVPs describing differentials allow
optimizing over geodesics, we will provide algorithms that allow iterative
optimization of \eqref{eq:pga-cost} and that thus allow PGA as defined in
\cite{fletcher_principal_2004-1} to be computed without the traditional linear 
approximation.

Solving the optimization problem \eqref{eq:pga-cost} 
requires the ability to compute the projection $\pi_S$.
We start with the gradient needed for iteratively computing the projection before deriving
the gradient of the cost function of 
\eqref{eq:pga-cost}. Computing these gradients will
require the differentials over geodesic families
derived in Section~\ref{sec:differentials}.
Thereafter, we present the actual algorithms for solving the problems before
discussing convergence issues.

The optimization problems \eqref{eq:pga-cost} and \eqref{eq:proj-cost} are posed
in the tangent space of the manifold at the sample mean and the unit sphere
of that tangent space, respectively. These domains have relatively simple
geometry, and, therefore, the complexity of the problems is contained in the
cost functions. Because of this, we will not need optimizing
algorithms that are specialized for domains with complicated geometry.
For simplicity, we compute
gradients and present steepest descent algorithms but it is straightforward to
compute Jacobians instead and use more advanced optimization algorithms such as
Gauss-Newton or Levenberg-Marquardt.

The overall approach is similar 
to the approach used for computing exact PGA in \cite{said_exact_2007}. Our solution differs in that we are
able to compute $\pi_S$ and its differential without restricting to
the manifold SO(3) and in that we optimize the functional \eqref{eq:pga-cost} instead of
the cost function used in \cite{fletcher_statistics_2003} that involves one-dimensional 
subspaces.

\subsection{The Geodesic Subspace Projection}
We consider the projection $\pi_S(x)$ of a point $x\in
M$ on a geodesic subspace $S$. Assume $S$ is centered at $\mu\in M$, 
let $V$ be a $k$-dimensional subspace of $T_\mu M$
such that $S=\Exp_\mu V$, and define a residual function $R_{x,\mu}:V\rightarrow\RR$ 
by $w\mapsto\|\Log_{\Exp_\mu w}x\|^2$ that measures squared distances between $x$ and points
in $S$. Computing $\pi_S(x)$ by solving \eqref{eq:proj-cost} is then equivalent to
finding $w\in V$ minimizing $R_{x,\mu}$. To find the gradient of $R_{x,\mu}$, 
choose an orthonormal basis for $V$ 
and extend it to a basis for $T_\mu M$. Furthermore, let
$w_0\in V$ and choose an orthonormal basis for the tangent space $T_{\Exp_\mu w_0}M$. 
Karcher showed in \cite{karcher_riemannian_1977} that the gradient
$\grad^y\|\Log_yx\|^2$ equals $-2\Log_yx$, and, using this, we get the gradient of the
residual function as
\begin{equation}
    \nabla_{w_0}^{w\in V}R_{x,\mu}
        =
    -2(D_{w_0}\Exp_\mu)_{1,\ldots,k}^T
    (\Log_{\Exp_\mu w_0}x)
    \label{eq:proj-proj}
\end{equation}
with $(D_{w_0}\Exp_\mu)_{1,\ldots,k}$ denoting the first $k$ columns of
the matrix $D_{w_0}\Exp_\mu$ expressed using the chosen bases.\footnote{In
coordinates of the bases, the differential $d_{w_0}\Exp_\mu$ becomes a matrix
that we write $D_{w_0}\Exp_\mu$.
The notation $\nabla_{w_0}^{w\in V}$ denotes differentiation along the basis
elements of $V$. See Appendix~\ref{app:notation} for additional notation.}
This matrix can be computed using the IVPs \eqref{sys:JI} or \eqref{sys:JP}.

\subsection{The Differential of the Subspace Projection}
\label{sec:grad-proj}
In order to optimize \eqref{eq:pga-cost}, we will need to compute gradients of
the form
\begin{equation}
    \grad_{v_0}^{v\in V_{v_0}^\perp}
    d(\mu,\pi_{S_v}(x))^2
    \label{eq:grad-1}
\end{equation}
with $V_v=\Span\{v^1,\ldots,v^k,v\}$, $S_v=\Exp_\mu(V_v)$ and $\mu\in M$.\footnote{
Since $v$ in \eqref{eq:pga-cost} is restricted to the unit sphere, we will not
need the gradient in the direction of $v_0$, and, therefore, we find the gradient
in the subspace $V_{v_0}^\perp$ instead of in the larger space
$\Span\{v^1,\ldots,v^k\}^\perp$.
As noted in Section~\ref{sec:pga-def}, the optimization approach presented here can be extended to
include optimization of the base point $\mu$ as well. Here, we use a fixed base
point that for PGA is an intrinsic mean of a data set.
}
This will involve the differential of $\pi_{S_v}(x)$ with respect to $v$. 
Since $\pi_{S_v}(x)$ is defined as a minimizer of \eqref{eq:proj-cost}, 
its differential cannot be obtained just by applying the chain and product rules.
Instead, we use the implicit function theorem to define a map $\Psi$ that equals 
$\pi_{S_v}(x)$ around a neighborhood of $v$ in $T_\mu M$. We then derive the
differential of $\Psi$.

For the result below, we extend the domain of residual function $R_{x,\mu}$
defined above from $V$ to the entire tangent space $T_\mu M$.
We will a choose basis for $T_\mu M$, and we let $\Hess(R_{x,\mu})$
denote the Hessian matrix of $R_{x,\mu}$ with respect to the basis. Similarly,
we will choose a basis for $V_{v_0}$, and
we let $\Hess(R_{x,\mu}|_{V_{v_0}})$ denote the Hessian matrix of $R_{x,\mu}$
restricted to $V_{v_0}$ with respect to this basis. Using this notation, we 
get the following result for the derivative of the projection $\pi_{S_v}(x)$:
\begin{proposition}
    Let $\{v^1,\ldots,v^k\}$ be an orthonormal basis for a subspace $V\subset T_\mu M$. 
    For each $v\in V^\perp$, let $V_v$ be the subspace $\Span\{V,v\}$, and
    let $S_v=\Exp_\mu V_v$ be the corresponding geodesic subspace. 
    Fix $v_0\in V^\perp$ and define $w_0=\Log_\mu\pi_{S_{v_0}}(x)$ for an
    $x\in M$. Suppose the matrix $\Hess_{v_0}(R_{x,\mu}|_{V_{v_0}})$ has full rank $k+1$. Extend the
    orthonormal basis $\{v^1,\ldots,v^k,v_0/\|v_0\|\}$
    for $V_{v_0}$ to an orthonormal basis for 
    $T_\mu M$. Then
    \begin{equation}
        \begin{split}
        &D^{v\in V_{v_0}^\perp}_{v_0}\pi_{S_v}(x)
        =
        -(D_{w_0}\Exp_\mu)\bar{v}_{x,\mu,v_0,S_{v_0}}
        \left( 
        \nabla_{w_0}^{w\in V_{v_0}^\perp} R_{x,\mu}
        \right)^T
        \\
        &\qquad\qquad\qquad\quad
        +w_0^{k+1}(D_{w_0}\Exp_\mu)E_{x,\mu,v_0,S_{v_0}}
        \ .
        \end{split}
        \label{eq:grad-proj}
    \end{equation}
    The coordinates of the vector $\bar{v}_{x,\mu,v_0,S_{v_0}}$
    in the basis for $V_{v_0}$ are contained in the $(k+1)$st column of 
    the matrix $\Hess_{v_0}(R_{x,\mu}|_{V_{v_0}})^{-1}$, the scalar $w_0^{k+1}$
    is the $(k+1)$st coordinate of $w_0$ in the basis, and
    $E_{x,\mu,v_0,S_{v_0}}$ is the matrix
    \begin{equation*}
        \begin{pmatrix}
            -\Hess_{w_0}\left(R_{x,\mu}|_{V_{v_0}}\right)^{-1}B_{w_0,v_0} \\
            I_{\eta-(k+1)}
        \end{pmatrix}
    \end{equation*}
    with $B_{w_0,v_0}$ the last $\eta-(k+1)$ columns of 
    the matrix
    $(\Hess_{w_0}\left(R_{x,\mu}\right)(V\ v_0))^T$
    and
    $I_{\eta-(k+1)}$
    the identity matrix.
    \label{prop:grad-proj}
\end{proposition}
Before proving the result, we discuss its use for computing the gradient
\eqref{eq:grad-1}. 
The assumption that the Hessian of the restricted residual $R_{x,\mu}|_{V_{v_0}}$ 
must have full rank is discussed below.

Because $d(\mu,\pi_{S_v}(x))^2=\|\Log_\mu\pi_{S_v}(x)\|^2$, 
we have
\begin{equation}
    \begin{split}
        \nabla_{v_0}^{v\in V_{v_0}^\perp}d(\mu,\pi_{S_v}(x))^2
        =
        2\left((D_{\pi_{S_{v_0}}(x)}\Log_\mu)(D_{v_0}^{v\in
        V_{v_0}^\perp}\pi_{S_v}(x))\right)^T
        (\Log_\mu\pi_{S_{v_0}}(x))
        \ ,
    \end{split}
    \label{eq:full-expr}
\end{equation}
which, combined with \eqref{eq:grad-proj}, gives \eqref{eq:grad-1}.
In order to compute the right hand side of \eqref{eq:grad-proj}, it is necessary to
compute parts of the Hessian of the non-restricted residual $R_{x,\mu}$.
For doing this, we will use the alternative formulation 
$R_{x,\mu}(w)=\|\Log_x\Exp_\mu w\|^2$ for the residual function.
With $w_0,v\in T_\mu M$ let $y=\Exp_\mu w_0$.
Working in the chosen orthonormal basis, we have
\begin{equation*}
    \nabla_{w_0} R_{x,\mu}
    =2
    \left(\left(D_y\Log_x\right)D_{w_0}\Exp_\mu\right)^T
    \Log_xy
    \ .
\end{equation*}
and hence
\begin{equation}
    \begin{split}
        &\df{\left(\nabla_{w_0+vs} R_{x,\mu}\right)}{s}|_{s=0}
        \\
        &\quad
        =
        2\left(
        \df{\left(D_{\Exp_\mu (w_0+sv)}\Log_x\right)}{s}|_{s=0}
        \left(D_{w_0}\Exp_\mu\right)
        \right)^T
        \Log_xy
        \\
        &
        \qquad
        +
        2\left(
        \left(D_{y}\Log_x\right)
        \df{\left(D_{w_0+vs}\Exp_\mu\right)}{s}|_{s=0}
        \right)^T
        \Log_xy
        \\
        &
        \qquad
        +
        2\left(
        \left(D_{y}\Log_x\right)
        \left(D_{w_0}\Exp_\mu\right)
        \right)^T
        \df{\left(\Log_x\Exp_\mu (w_0+sv)\right)}{s}|_{s=0}
        \ .
    \end{split}
    \label{eq:hessian}
\end{equation}
Note that
\begin{align*}
    \df{\left(\Log_x\Exp_\mu (w_0+sv)\right)}{s}|_{s=0}
    =
    \left(D_y\Log_x\right)
    \left(D_{w_0}\Exp_\mu\right)
    v
    \ .
\end{align*}
Using that $\df{(A_s^{-1})}{s}=A_s^{-1}(\df{A_s}{s})A_s^{-1}$ for a time dependent,
invertible matrix $A_s$\footnote{
See \cite[Eq. (2)]{decell_derivative_1974}.
} and the fact that
$\Exp_x\Log_x z=z$ for all $z$, we get
\begin{align*}
    &\df{\left(D_{\Exp_\mu (w_0+sv)} \Log_x\right)}{s}|_{s=0}
    =
    \df{\left(D_{\Log_x(\Exp_\mu w_0+sv)}\Exp_x\right)^{-1}}{s}|_{s=0}
    \\
    &\quad
    =
    -\left(D_y\Log_x\right)
    \df{\left(D_{\Log_x(\Exp_\mu w_0+sv)}\Exp_x\right)}{s}|_{s=0}
    \left(D_y\Log_x\right)
    \ .
\end{align*}
The middle term of this product and the term 
$\df{\left(D_{w_0+sv}\Exp_\mu\right)}{s}|_{s=0}$ in \eqref{eq:hessian}
can be computed using the IVPs \eqref{sys:JP2},\eqref{sys:JI2} discussed in
Section~\ref{sec:differentials}.

\begin{proof}[Proposition~\ref{prop:grad-proj}]
Extend the basis
$\{v^1,\ldots,v^k,v_0/\|v_0\|\}$ for $V_{v_0}$ to an orthonormal basis for 
$T_\mu M$. The argument is not dependent on this choice of basis, but it will 
make the reasoning and notation easier. Let $S\subset T_\mu M\times V^\perp$ be an open neighborhood of $(w_0,v_0)$ and
define the map $F_V:S\rightarrow\RR^\eta$ by
\begin{equation*}
    F_V(w,v)
    =
    \begin{pmatrix}
        \nabla_w R_{x,\mu}\cdot v^1 \\
        \vdots \\
        \nabla_w R_{x,\mu}\cdot v^k \\
        \nabla_w R_{x,\mu}\cdot v \\
        w\cdot u^1(v) \\
        \vdots \\
        w\cdot u^{\eta-k-1}(v) 
    \end{pmatrix}
    =
    \begin{pmatrix}
        \Big(V\ v\Big)^T\nabla_w R_{x,\mu} \\
        U_v^Tw \\
    \end{pmatrix}
\end{equation*}
with the vectors $u^1(v),\ldots,u^{\eta-(k+1)}(v)$ constituting an orthonormal basis for
$V_v^\perp$ for each $v$ and with $(V\ v)$ and $U_v$ denoting the matrices having $v^i,v$
and $u^i(v)$ in the columns, respectively. 
Since $\ip{\nabla_{w_0}R_{x,\mu},v}=d_{w_0}R_{x,\mu}(v)=0$ for all
$v\in V_{v_0}$ because $w_0$ is a minimizer for $R_{x,\mu}$ among
vectors in in $V_{v_0}$, we see that 
$F_V(w_0,v_0)$ vanishes. Therefore, if $D^w_{(w_0,v_0)}F_V$ is 
non-singular, the implicit function theorem asserts the existence of a map $\Psi$ 
from a neighborhood 
of $v_0$ to $T_\mu M$ with the property that $F_V(\Psi(v),v)=0$ for all $v$ in
the neighborhood. We then compute
\begin{equation*}
    0
    =D_{v_0}F_V(\Psi(v),v)
    =
    \left(D^w_{(w_0,v_0)}F_V\right)\left(D_{v_0}\Psi(v)\right)
    +\left(D^v_{(w_0,v_0)}F_V\right)
\end{equation*}
and hence
\begin{equation}
    D^{v\in V_{v_0}^\perp}_{v_0}\Psi(v)
    =
    -\left(D^w_{(w_0,v_0)}F_V\right)^{-1}
    \left(D^{v\in V_{v_0}^\perp}_{(w_0,v_0)}F_V\right)
    \ .
    \label{eq:dpsi}
\end{equation}
For the differentials on the right hand side of \eqref{eq:dpsi}, we have
\begin{equation*}
    D^{v\in V_{v_0}^\perp}_{(w_0,v_0)}F_V
    =
    \Big(
        0\ 
        \cdots\ 
        0\ \ 
        \nabla_{w_0}^{w\in V_{v_0}^\perp}R_{x,\mu}\ \ 
        D^{v\in V_{v_0}^\perp}_{(w_0,v_0)}\left(w_0^TU_v\right)
       \Big)^T
\end{equation*}
 and
\begin{equation}
    \begin{split}
    D^w_{(w_0,v_0)}F_V
    &=
    \begin{pmatrix}
        \Big(V\ v_0\Big)^T\tmp{w}_{w_0}\left(\nabla_w R_{x,\mu}\right) \\
        U_{v_0}^T \\
    \end{pmatrix}
    =
    \begin{pmatrix}
        \left(\Hess_{w_0}\left(R_{x,\mu}\right) \Big(V\ v_0\Big)\right)^T
         \\ 
        U_{v_0}^T \\
    \end{pmatrix}
     \ .
    \end{split}
    \label{eq:dwFv-splitup}
\end{equation}
With the choice of basis, the above matrix is block triangular,
\begin{equation}
    D^w_{(w_0,v_0)}F_V
    =
    \begin{pmatrix}
        A_{w_0,v_0} & B_{w_0,v_0}\\
        0 & C_{w_0,v_0}
    \end{pmatrix}
        \ ,
    \label{eq:blockt}
\end{equation}
with $A_{w_0,v_0}$ equal to $\Hess_{w_0}(R_{x,\mu}|_{V_{v_0}})$.  The
requirement that $D^w_{(w_0,v_0)}F_V$ is non-singular is 
fulfilled, because $\Hess_{w_0}(R_{x,\mu}|_{V_{v_0}})$ has rank $k+1$ by assumption
and $U_{v_0}$ has rank $\eta-(k+1)$.

Since the first $k$ rows of $D^{v\in V_{v_0}^\perp}_{(w_0,v_0)}F_V$ are zero, we need only 
the last $\eta-k$ columns of $(D^w_{(w_0,v_0)}F_V)^{-1}$ in order to
compute \eqref{eq:dpsi}. The vector $\bar{v}_{x,\mu,v_0,S_{v_0}}$ as defined in
the statement of the theorem is equal to the $(k+1)$st column. 
Let $E_{x,\mu,v_0,S_{v_0}}$ be the matrix consisting of the remaining $\eta-(k+1)$ 
columns. Using the form \eqref{eq:blockt}, we have
\begin{equation*}
    E_{x,\mu,v_0,S_{v_0}}
    =
    \begin{pmatrix}
        -\Hess_{w_0}\left(R_{x,\mu}|_{V_{v_0}}\right)^{-1}B_{w_0,v_0} C_{w_0,v_0}^{-1} \\
        C_{w_0,v_0}^{-1}
    \end{pmatrix}
    \ .
\end{equation*}

Assume $\{u^1,\ldots,u^j\}$ is chosen such that $\{u^1(v_0),\ldots,u^j(v_0)\}$
equals the previously chosen basis for $V_{v_0}^\perp$. With this assumption,
$C_{w_0,v_0}$ is the identity matrix $I_{\eta-(k+1)}$.
In addition, let $w_0^{k+1}$ denote the $(k+1)$st component of $w_0$, 
that is, the projection of
$w_0$ onto $v_0/\|v_0\|$. Since $w_0\in V_{v_0}$ and
by choice of $U_v$, \lemref{orth-deriv} (see Appendix~\ref{app:B}) gives
\begin{equation*}
    D^{v\in V_{v_0}^\perp}_{(w_0,v_0)}\left(U_v^Tw\right)
    =
    w_0^{k+1}D^{v\in V_{v_0}^\perp}_{(w_0,v_0)}\left(U_v^T\tfrac{v_0}{\|v_0\|}\right)
    =
    -w_0^{k+1}I_{\eta-(k+1)}
    \ .
\end{equation*}
Therefore,
\begin{equation}
    D^{v\in V_{v_0}^\perp}_{(w_0,v_0)}F_V
    =
    \Big(
        0\ 
        \cdots\ 
        0\ \ 
        \nabla_{w_0}^{w\in V_{v_0}^\perp} R_{x,\mu}\ \ 
        -w_0^{k+1}I_{\eta-(k+1)}
       \Big)^T
        \ .
        \label{eq:Fv}
\end{equation}
Note, in particular, that $D^{v\in V_{v_0}^\perp}_{(w_0,v_0)}F_V$ is independent
on the actual choice of bases $U_v$. Combining \eqref{eq:dpsi}, \eqref{eq:Fv},
and the fact that
$\bar{v}_{x,\mu,v_0,S_{v_0}}$ and $E_{x,\mu,v_0,S_{v_0}}$ constitute the needed
columns of $(D^w_{(w_0,v_0)}F_V)^{-1}$, we get 
\begin{equation*}
    D^{v\in V_{v_0}^\perp}_{v_0}\Psi(v)=-\bar{v}_{x,\mu,v_0,S_{v_0}}(\nabla_{w_0}^{w\in V_{v_0}^\perp}
    R_{x,\mu})^T
    +w_0^{k+1}E_{x,\mu,v_0,S_{v_0}}
    \ .
\end{equation*}
Because $\Exp_\mu\Psi(v)=\pi_{S_v}(x)$, this provides \eqref{eq:grad-proj}.

\end{proof}

\subsection{Exact PGA Algorithm}
The gradients of the cost functions enable us to 
iteratively solve the optimization problems
\eqref{eq:pga-cost} and \eqref{eq:proj-cost}.
We let $\mu$ be an intrinsic mean of a dataset $\{x_1,\ldots,x_N\}$, 
$x_i\in M$.
The algorithms listed below are essentially steepest descent methods but,
as previously noted, Jacobian-based optimization algorithms can be employed as well.

Algorithm~\ref{alg:proj} for computing $\pi_S(x)$ updates $w\in V$
instead of the actual point $y\in S$ that we are interested in. The vector $w$ is
related to $y$ by $y=\Exp_\mu w$.
\begin{algorithm}
\caption{Calculate $\pi_S(x)$}
\label{alg:proj}
\begin{algorithmic}
    \REQUIRE $x\in M$, $S=\Exp_\mu V$ geodesic subspace.
    \STATE $w \Leftarrow \mbox{orthogonal projection of }\Log_\mu x \mbox{ onto
    }V$
    \COMMENT{initial guess} 
  \REPEAT 
  \STATE $y \Leftarrow \Exp_\mu w$
    \COMMENT{vector to point}
  \STATE $g \Leftarrow -2(D_{w_0}\Exp_\mu)_{1,\ldots,k}^T\Log_yx$ 
    \COMMENT{gradient} 
  \STATE $\tilde{w} \Leftarrow w$ 
    \COMMENT{previous $w$}
  \STATE $w \Leftarrow w-g$ 
    \COMMENT{update $w$}
  \UNTIL{$\|\tilde{w}-w\|$ is sufficiently small}.
\end{algorithmic}
\end{algorithm}

The algorithm for solving \eqref{eq:pga-cost}
is listed in Algorithm~\ref{alg:pga}. 
Since $v$ in \eqref{eq:pga-cost} is required to be on the unit sphere, 
the optimization will take place on a manifold,
and a natural approach to compute iteration updates will use the exponential
map of the sphere. Yet, because of the symmetric geometry of the sphere, we 
approximate this using the simpler method
of adding the gradient to the previous guess and normalizing.
When computing the $(k+1)st$ principal direction, we 
choose the initial guess as the first regular PCA vector of the data projected
to $V_k^\perp$ in $T_\mu M$. See Figure~\ref{fig:alg} for
an illustration of an iteration of the algorithm.
\begin{algorithm}
    \caption{Calculate the $(k+1)st$ principal direction of \eqref{eq:pga-cost}.}
\label{alg:pga}
\begin{algorithmic}
    \REQUIRE $\mu,x_1,\ldots,x_N\in M$, $\{v^1,\ldots,v^k\}$ orthogonal basis for
    $V_k\subset T_\mu M$.
  \STATE $v \Leftarrow $ first PCA vector of $\{x_j\}$ projected first to
  $T_\mu M$\\\hspace{2.1em}using $\Log_\mu$ and then to $V_k^\perp$
    \COMMENT{initial guess} 
  \REPEAT 
  \STATE $g_j \Leftarrow \nabla_{v}^{v\in V_{v}^\perp}d(\mu,\pi_{S_v}(x_j))^2$
    \COMMENT{for each $j$ using \eqref{eq:full-expr}}
    \STATE $g \Leftarrow \frac{1}{N}\sum_{j=1}^N g_j
$ 
    \COMMENT{gradient} 
  \STATE $\tilde{v} \Leftarrow v$ 
    \COMMENT{previous $v$}
  \STATE $v \Leftarrow v+g$ 
    \COMMENT{update $v$}
  \STATE $v \Leftarrow v/\|v\|$ 
    \COMMENT{normalize}
  \UNTIL{$\|\tilde{v}-v\|$ is sufficiently small}.
\end{algorithmic}
\end{algorithm}
\begin{figure}[h]
    \begin{center}
      \includegraphics[width=0.5\columnwidth,trim=0 40 0 0,clip=true]{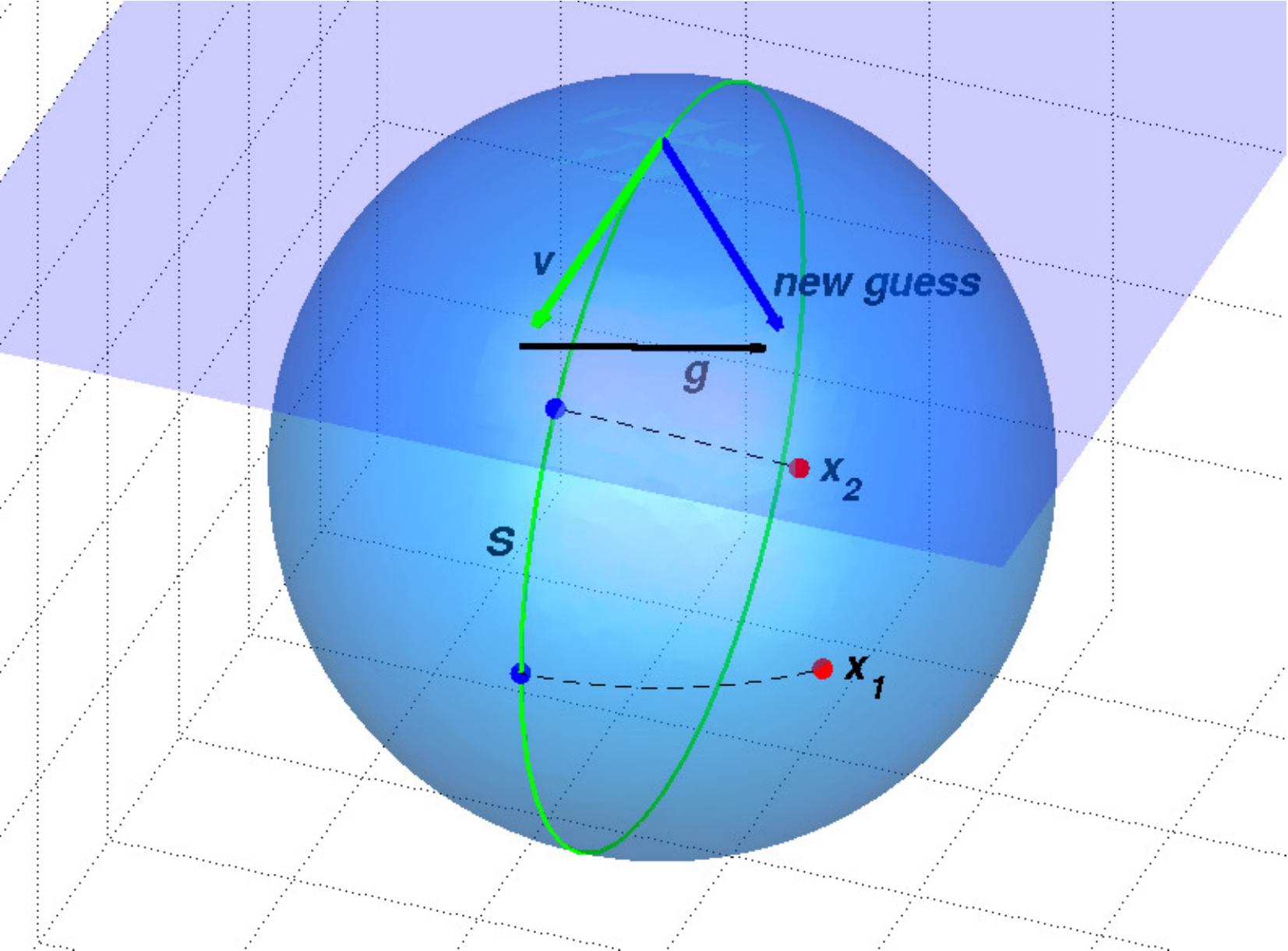}
    \end{center}
    \caption{An iteration of Algorithm~\ref{alg:pga}. The figure shows data points $x_1$ and
    $x_2$ (red points) with projections (blue points) to the geodesic subspace $S$ (green
    line). The vector $v$ defining $S$ is updated to the new guess by adding the
  gradient $g$ and normalizing.}
    \label{fig:alg}
\end{figure}

\subsection{Assumptions and Convergence}
As discussed in section \ref{sec:pga-def}, because a uniqueness and existence of
both the intrinsic mean and optima for \eqref{eq:pga-cost} may fail, the PGA
problem may not be well defined in itself. The uniqueness of the mean can be
obtained by assuming the data is sufficiently concentrated depending on the
curvature, see \cite{karcher_riemannian_1977}.

The curvature of the manifold may make the optimization problems non-convex,
and convergence to a global optimum is therefore only
ensured under the assumption that the problems
\eqref{eq:pga-cost} and \eqref{eq:proj-cost} are convex or that no local minima
exist.  Giving criteria for convexity or non-existence of local optima for
general manifolds and data sets is 
difficult because of the dependence on the global geometry of the manifolds.

The rank assumption on the Hessian used in Proposition~\ref{prop:grad-proj} is
equivalent to the residual $R_{x,\mu}$ having only non-degenerate critical points when
restricted to $V_{v_0}$. 
It is shown in \cite{karcher_riemannian_1977} that $R_{x,\mu}$ is convex at points sufficiently
close to $x$ and the assumption is therefore satisfied in such cases. In
particular, this is satisfied if Algorithm~\ref{alg:pga} is initialized with
subspaces that provide a good approximation to the data.

\section{Experiments}
\label{sec:experiments}
We will use the optimization strategy and the developed algorithm for exact PGA
to illustrate the
differences between exact and linearized PGA. Furthermore, we will estimate sectional
curvatures and compute injectivity radius bounds.
Even though the algorithms are not limited to low dimensional data, we aim at visualizing the
results and we will therefore provide examples with synthetic data on low
dimensional manifolds. The setup allows exploring the connection between
the geometry and curvature of the manifolds and the exact PGA result, and we
will show how the variance and residual formulation can provide fundamentally
different results.
For a comparison between the methods on high dimensional 
manifolds modeling real-life data, we refer the reader to
\cite{sommer_manifold_2010} where datasets of
human vertebrae X-rays and motion capture data are treated.

The PGA algorithm is implemented in Matlab using 
Runge-Kutta ODE solvers. For the logarithm map, we use the shooting algorithm developed
in \cite{sommer_bicycle_2009}. All tolerances used for the integration and
logarithm calculations are set at or lower than an order of magnitude of the 
precision used for the displayed results. Intrinsic means are computed by
iteratively minimizing variance using the gradient $\grad^y\|\Log_yx\|^2=-2\Log_yx$ (see
\cite{karcher_riemannian_1977}). The code used for the experiments
is available at \url{http://image.diku.dk/sommer}.

\ \\
We first consider surfaces embedded in $\RR^3$ and defined by the equation
\begin{equation*}
    S_c=\{(x_1,x_2,x_3)|cx_1^2+x_2^2+x_3^2=1\}
\end{equation*}
for different values of the scalar $c$. For $c>0$, $S_c$ is an ellipsoid
and it is equal to $\SSS^2$ in the case $c=1$. The surface $S_0$ is a cylinder and, for
$c<0$, $S_c$ is hyperboloid. Consider the point $p=(0,0,1)$ and note that $p\in S_c$ for all $c$. 
The curvature of $S_c$ at $p$ is equal to $c$. Note in particular that for the
cylinder case the curvature is zero; the cylinder locally has the geometry
of the plane $\RR^2$ even though it informally seems to curve.

We evenly distribute 20 points along two straight lines
through the origin of the tangent space $T_pS_c$, project the points 
from $T_pS_c$ to the surface $S_c$, and perform linearized and exact PGA. 
Figure~\ref{fig:low} illustrates the
situation in $T_pS_{-1}$ and on $S_{-1}$ embedded in $\RR^3$, respectively.
The lines are chosen in order to ensure the points are spread over areas of the
surface with different geometry. This choice is made to illustrate the influence
of the curvature;
a more realistic example with points sampled from a Gaussian will be provided
below.
\begin{figure}[h]
    \begin{center}
      \subfigure[$T_pS_{-1}$ with sampled points and first principal components
      (blue exact PGA, green linearized PGA).\label{fig:low1}]{\includegraphics[width=0.45\columnwidth,trim=30 20 10 10,clip=true]{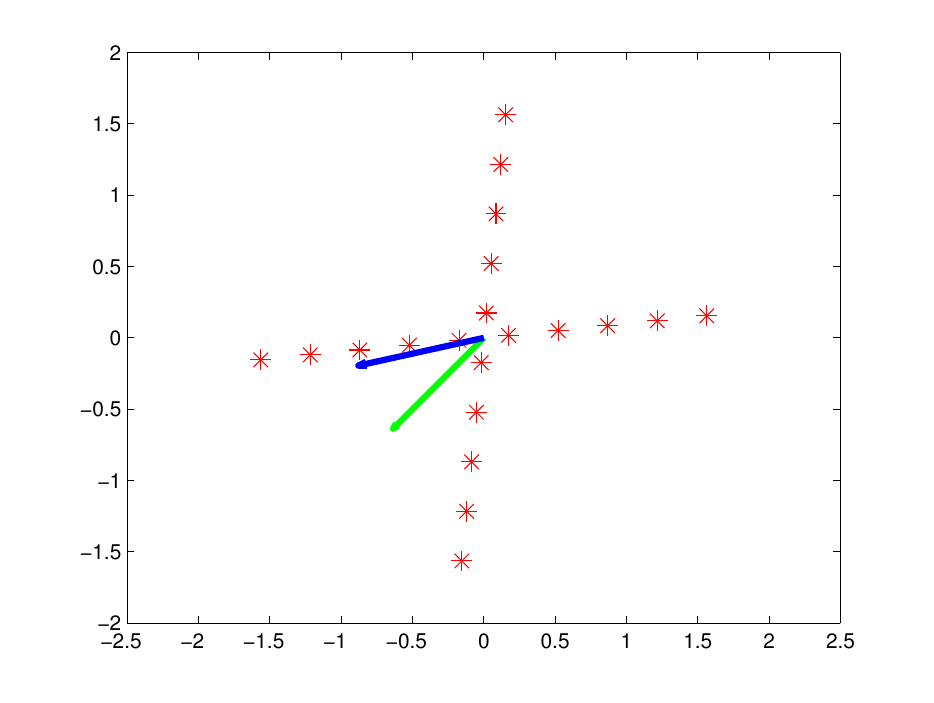}}
      \subfigure[$S_{-1}$ with projected points and first principal components (blue
      exact PGA \eqref{eq:pga-cost}, green linearized PGA).\label{fig:low2}]{\includegraphics[width=0.5\columnwidth,trim=40 30 70 50,clip=true]{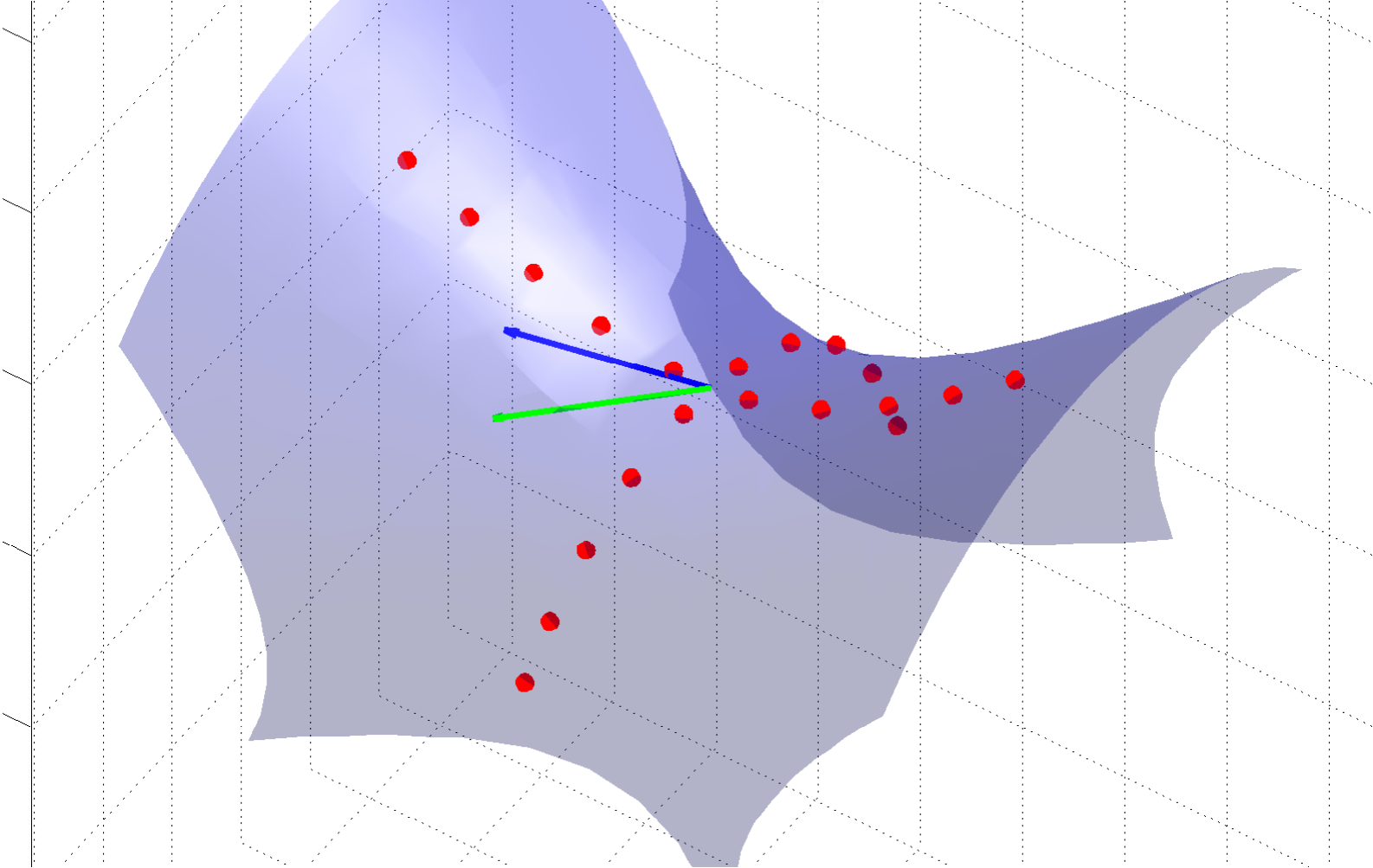}}
    \end{center}
    \caption{The tangent space $T_pS_{-1}$ and the manifold $S_{-1}$ with sample
  points.}
    \label{fig:low}
\end{figure}

Since linearized PCA amounts to Euclidean PCA in $T_pS_c$, the first principal
direction found using linearized PGA divides the angle between the lines for all $c$. 
In contrast to this, the variance and the
first principal direction found using exact PGA are dependent on $c$.
Table~\ref{table:low1} shows the angle between the principal directions found
using the two methods, the variances and variance differences for different
values of $c$. 
\begin{table}[ht]
    \scriptsize
    \setlength{\tabcolsep}{5pt}
\begin{center}
\begin{tabular}{lrrrrrrrrrrr}
  \hline
  \bf{c}: & 
  \multicolumn{1}{c}{\bf{1}} & 
  \multicolumn{1}{c}{\bf{0.5}} &
  \multicolumn{1}{c}{\bf{0}} &
  \multicolumn{1}{c}{\bf{-0.5}} &
  \multicolumn{1}{c}{\bf{-1}} &
  \multicolumn{1}{c}{\bf{-1.5}} &
  \multicolumn{1}{c}{\bf{-2}} & 
  \multicolumn{1}{c}{\bf{-3}} &
  \multicolumn{1}{c}{\bf{-4}} &
  \multicolumn{1}{c}{\bf{-5}}\\
  \hline
  angle $(^\circ)$:     & 0.0   & 0.1   & 0.0   & 22.3  & 29.2  & 31.5  & 32.6 & 33.8  & 34.2   & 34.5 \\
  linearized var.:         & 0.899 & 0.785 & 0.601 & 0.504 & 0.459 & 0.435 & 0.423 & 0.413 & 0.413 & 0.417 \\
  exact var.:           & 0.899 & 0.785 & 0.601 & 0.525 & 0.517 & 0.512 & 0.510 & 0.508 & 0.507 & 0.506 \\
  difference:           & 0.000 & 0.000 & 0.000 & 0.012 & 0.058 & 0.077 & 0.087 & 0.095 & 0.094 & 0.089 \\
  difference (\%):      & 0.0   & 0.0   & 0.0   & 4.2   & 12.5  & 17.6  & 20.6 & 23.0  & 22.7   & 21.4 \\
   \hline
\end{tabular}
\caption{Differences between methods for different values of $c$.}
\label{table:low1}
\end{center}
\end{table}

Let us give a brief explanation of the result. The symmetry of the sphere and the
dataset cause the effect of curvature to even out in the spherical case $S_1$.
The cylinder $S_0$ has local geometry equal to $\RR^2$ which causes the equality
between the methods in the $c=0$ case. The hyperboloids with $c<0$ that 
can be constructed by revolving a hyperbola around its semi-minor axis
are non-symmetric causing an increase in variance as the first 
principal direction approaches the
hyperbolic axis. The effect increases with the curvature causing 
the first principal direction to align with the hyperbolic axis for large negative 
values of $c$. That the non-linearity is quite complex can be seen from the
decreasing differences for $c=-4,-5$, a consequence of the increasing variance captured
using linearized PGA. This is caused by geodesics close to the semi-minor axis
being curved upwards towards the hyperbolic axis for large negative $c$. This
results in increased captured variance that dominates the otherwise decreasing trend as $c$ drops
below $-3$.
For all negative values of $c$, exact PGA is able to capture 
more variance in the subspace spanned by the first principal direction than
linearized PGA.

Differences between the maximal variance PGA formulation
\eqref{eq:pga-cost} and the formulation that minimizes residual errors can be
exemplified on simple geometries when the spread of the data is large.
Similar examples for Geodesic PCA with variance formulation is reported \cite{huckemann_intrinsic_2010}.
In Figure~\ref{fig:sphere-orth}, points are sampled along a great circle through
the north pole on a sphere ($c=1$). In order to illustrate the result of
maximizing projection variance, we start with the PGA center point fixed to
the north pole. In this case, each iteration of the optimization procedure
pushes the first principal component $v^1$ \emph{away} from the direction of the great
circle. In fact, the optimal direction is \emph{orthogonal} to the direction of
the great circle. This very counter-intuitive effect is caused by the projection of the
points on the southern hemisphere moving closer to the south pole as the principal
subspaces moves away from the great circle thus causing the measured variance to
increase. In fact, the cost function \eqref{eq:pga-cost}
is non-differentiable at the optimal direction and the projections become
discontinuous functions of $v^1$. If we instead choose the formulation that
minimizes residuals, the first principal component will align with the direction
of the great circle.
\begin{figure}[h]
    \begin{center}
      \subfigure[Top view of the sphere $S_{1}$.\label{fig:sphere-orth1}]%
      {\includegraphics[width=0.45\columnwidth,trim=30 10 10 00,clip=true]{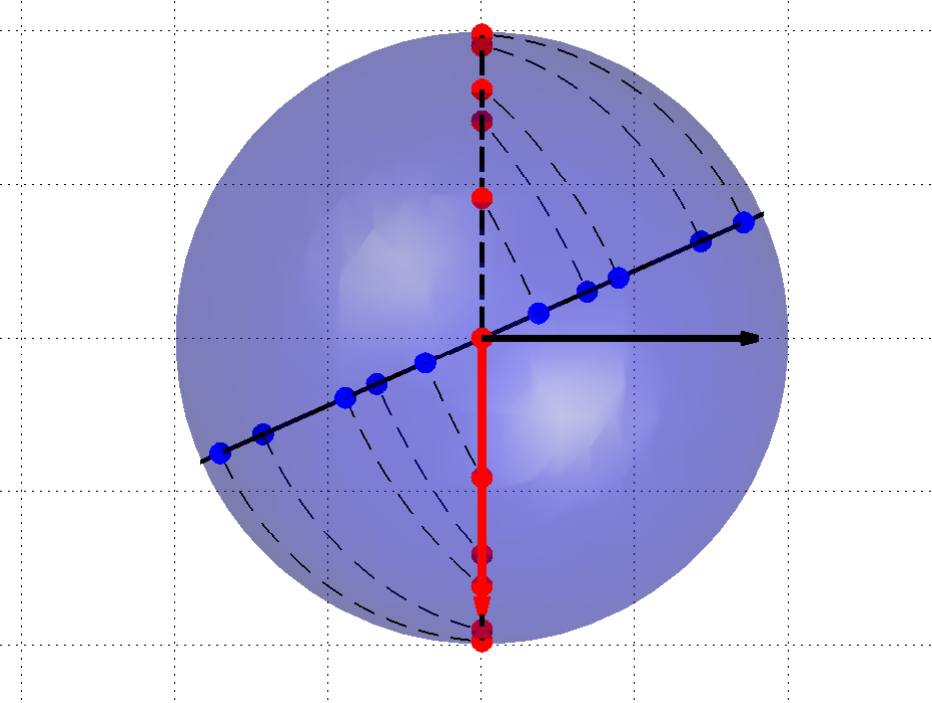}}
      \subfigure[Southern hemisphere of $S_{1}$.\label{fig:sphere-orth2}]%
      {\includegraphics[width=0.45\columnwidth,trim=30 10 10 00,clip=true]{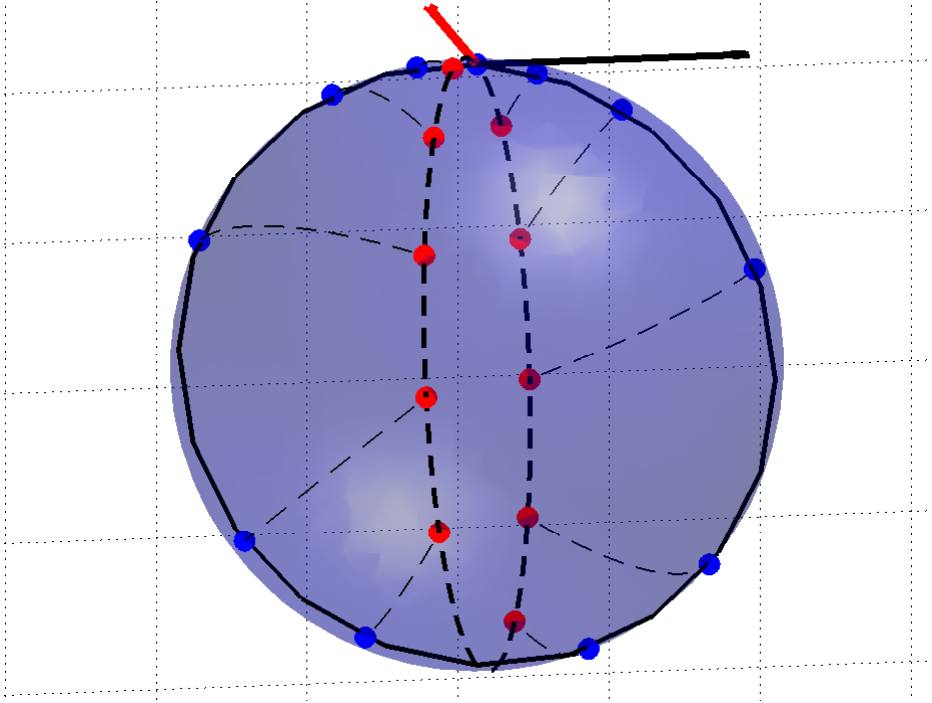}}
    \end{center}
    \caption{The sphere $S_1$ with points (red) sampled along a great circle
    (black dotted circle) with tangent vector (red arrow). The optimization for the first principal 
    component $v^1$ is stopped before it reaches its optimum (25
    iterations). The actual optimum (black arrow) is orthogonal to the
    great circle containing the data points. The variance is measured for the points (blue) projected to
    the current guess for the first principal subspace (black circle). 
    As the guess for $v^1$ moves away from being tangent to the circle containing the data points,
    points on
    the southern hemisphere move southwards 
    causing the measured variance to increase.
}
    \label{fig:sphere-orth}
\end{figure}
To show that this effect persists under permutations of the data,
we sample points uniformly along a geodesic on an ellipsoid ($c=0.5$)
adding Gaussian noise on the component orthogonal to the geodesic (std. dev.
$0.1$). This time, we optimize for the mean. The ellipsoidal geometry forces the mean
to be close to the geodesic which is the reason for sampling on an ellipsoid; on a
sphere, the mean is unstable under permutations of the data when the data lies
close to a great circle. In
Figure~\ref{fig:ellipsoid-orth}, we show the first principal component as
computed with the variance and residual formulation, respectively.
As for the example on the sphere, the optimization
converges to a first principal component orthogonal to the geodesic with the
variance formulation. We again stop the optimization after a number of iterations
before it reaches its non-differentiable optimum.
With the residual formulation, the first principal
component aligns with the geodesic along which the points are sampled.
\begin{figure}[h]
    \begin{center}
      \subfigure[First principal component (black arrow), variance formulation.\label{fig:ellipsoid-orth1}]%
      {\includegraphics[width=0.45\columnwidth,trim=100 70 100 50,clip=true]{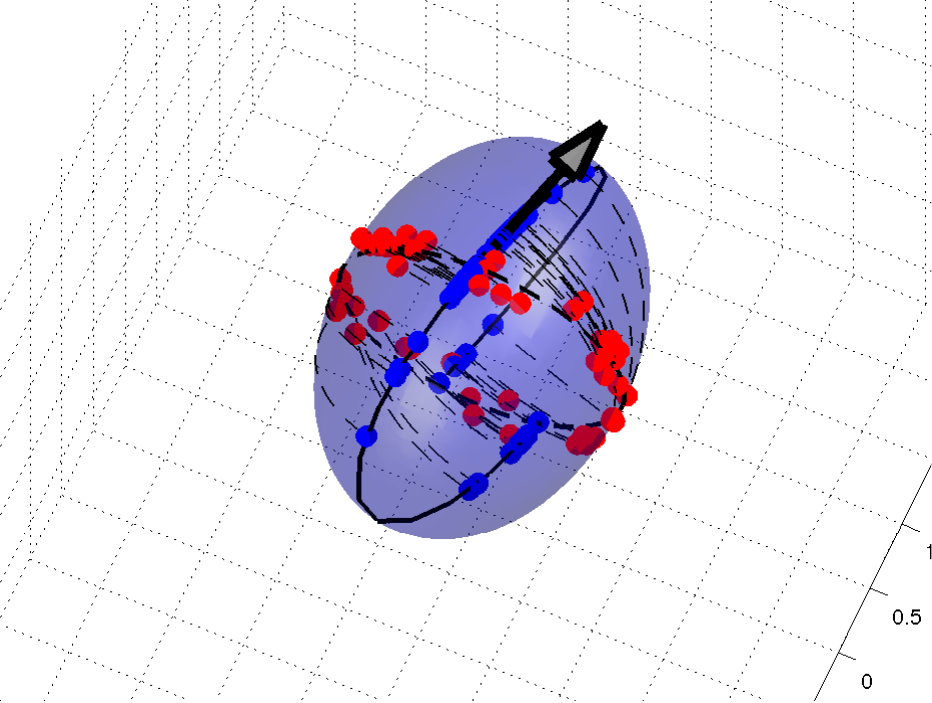}}
      \subfigure[First principal component (black arrow), residual formulation.\label{fig:ellipsoid-orth2}]%
      {\includegraphics[width=0.45\columnwidth,trim=100 70 100 50,clip=true]{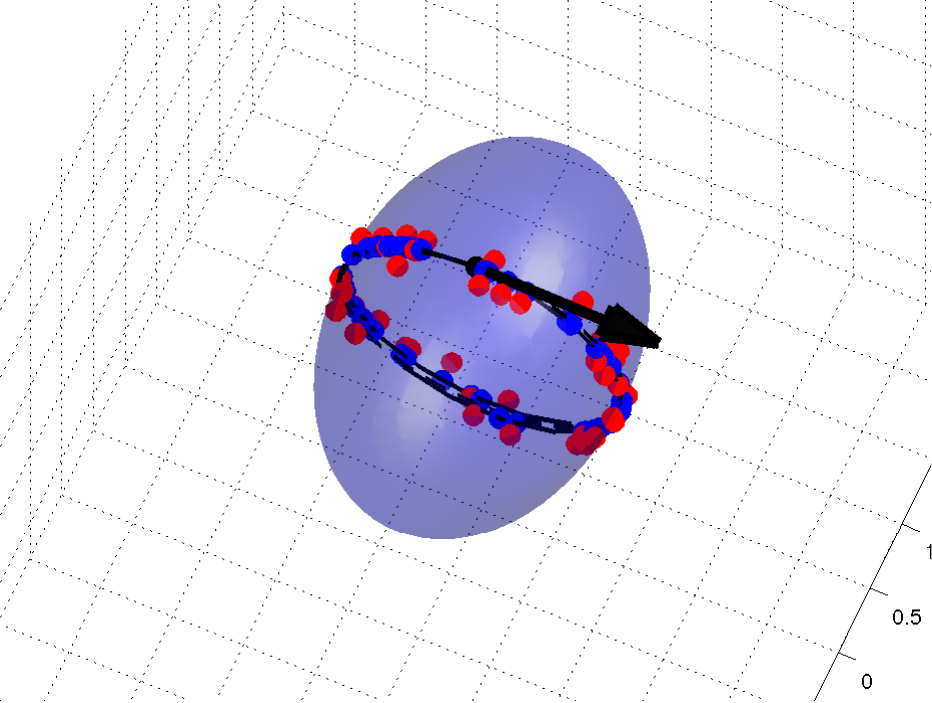}}
    \end{center}
    \caption{An ellipsoid $(c=0.5)$ with points (red) sampled uniformly along a geodesic
    (black dotted circle). Gaussian noise
    (std. dev. 0.1) is added to displace the points orthogonally to the
    geodesic.
    The black arrows show
    the result of the optimization with (a) the variance formulation and (b) the
    residual formulation. 
    With the variance formulation, the optimization is again
    stopped before it reaches its non-differential optimum orthogonal to
    the geodesic along which the points are sampled.
    The results show that the orthogonality of the first principal component
    observed in Figure~\ref{fig:sphere-orth} also occurs with perturbed data.}
    \label{fig:ellipsoid-orth}
\end{figure}
See also \cite{huckemann_intrinsic_2010} for futher
discussions on variance vs. residual formulations.

To investigate the difference between exact and linearized PGA with more than one principal direction, we consider
a four dimensional manifold embedded in $\RR^5$ and defined by
\begin{equation*}
    M_4=\{(x_1,x_2,x_3,x_4,x_5)|x_1^2-2x_2^2+x_3^2-2x_4^2+x_5^2=1\}
    \ .
\end{equation*}
We make the situation more realistic than in the previous experiment by 
sampling 32 random points in the tangent space $T_pM_4$, $p=(0,0,0,0,1)$. Since $T_pM_4$ is
an affine subspace of $\RR^5$ orthogonal to the $x_5$ axis, we can identify it with
$\RR^4$ by the map $(x_1,x_2,x_3,x_4)\mapsto(x_1,x_2,x_3,x_4,1)$. We use this
identification when sampling by defining a normal distribution in $\RR^4$,
sampling the 32 points from the distribution, and mapping the results to $T_pM_4$.
The covariance is set to
$\Sigma=\mbox{diag}(2,1,2/3,1/3)$ to get non-spherical distribution and to
increase the probability of data spreading over high-curvature parts of the
manifold. Table~\ref{table:low2} lists the variances and variance differences for 
the four principal directions for both methods along with angular differences. 
The lower variance for exact PGA compared to the linearized method for the
2nd principal direction is due to the greedy definition of PGA; when maximizing variance
for the 2nd principal direction, we keep the first principal direction
fixed. Hence we may get lower variance than what is obtainable if we were to maximize
for both principal directions together.
\begin{table}[ht]
\begin{center}
\begin{tabular}{lrrrrr}
  \hline
  \bf{Princ. comp.}: & \multicolumn{1}{c}{\bf{1}} & \multicolumn{1}{c}{\bf{2}} &
  \multicolumn{1}{c}{\bf{3}}& \multicolumn{1}{c}{\bf{4}} \\
  \hline
  angle $(^\circ)$:     & 10.1 & 10.6 & 12.0 & 12.2 \\
  linearized var.:         & 1.58 & 3.86 & 4.13 & 4.35 \\
  exact var.:           & 1.93 & 3.85 & 4.24 & 4.35 \\
  difference:           & 0.35 & -0.01 & 0.11 & 0.00 \\
  difference (\%):      & 21.9 & -0.3 & 2.6 & 0.0 \\
   \hline
\end{tabular}
\caption{Differences between the methods on $M_4$. The variances 
of the data projected to the subspaces spanned by the first $k$ principal 
directions and the percentage and angular differences are shown for $k=1,\ldots,4$.}
\label{table:low2}
\end{center}
\end{table}

We clearly see angular differences between the principal directions. In
addition, there is significant difference in accumulated variance in the 
first and third principal
direction. We note that the percentage difference is calculated from 
what corresponds to the accumulated eigenspectrum in PCA. The percentage difference of the increase between the
second and third principal direction, corresponding to the squared length of the third eigenvalue in
PCA, is greater.

\subsection{Curvature and Conjugate Points}
Again considering the surfaces $S_c$,
we can approximate the sectional curvature $K_p$ of
$S_c$ at $p$ using \eqref{eq:sec-curve}.
The approximation is dependent on the value of the positive scalar $t$ with 
increasing precision as $t$ decreases to zero.
Table~\ref{table:Ks} shows the result of the sectional curvature approximation for 
two values of $t$ compared to the real sectional curvature.
\begin{table}[ht]
    \scriptsize
    \setlength{\tabcolsep}{5pt}
\begin{center}
\begin{tabular}{lrrrrrr}
  \hline
  \bf{c}: & 
  \multicolumn{1}{c}{\bf{1}} &
  \multicolumn{1}{c}{\bf{0}} &
  \multicolumn{1}{c}{\bf{-1}} & 
  \multicolumn{1}{c}{\bf{-2}} &
  \multicolumn{1}{c}{\bf{-3}}\\
  \hline
  $K_p$:        
  & 1   & 0  & -1  & -2  & -3  \\
  $K_p$ est., $t=0.01$:  
  & 1.000   & 0.000  & -1.000  & -2.000  & -3.000  \\
  $K_p$ est., $t=0.1$:  
  & 1.000   & 0.000  & -1.001  & -2.002  & -3.005  \\
   \hline
\end{tabular}
\caption{Sectional curvature at $p$ for different values of $c$.}
\label{table:Ks}
\end{center}
\end{table}

Now let $J_t$ be the Jacobi field with $J_0=0$ and $\Df{J_0}{t}=(1,0,0)^T$ along the geodesic
$x_t=\Exp_pt(0,1,0)^T$. Figure~\ref{fig:conjugate} shows $\|J_t\|$ for different values of $c$.
We see that $\|J_\pi\|=0$ for the spherical case $S_1$ showing that $x_1$ is a conjugate point and
hence giving the upper bound $\pi$ on the injectivity radius.
The situation is illustrated in Figure~\ref{fig:sphere-jacobi}.
The local geometric equivalence between the cylinder $S_0$ and $\RR^2$ causes the
straight line for $c=0$. 
For all $c\le 1$, the injectivity radius of $S_c$ is $\pi$, but for $c<1$, the point $x_\pi$ 
is not a conjugate point\footnote{
For $c<1$, $x_\pi$ is a \emph{cut} point 
\cite[Chap.  13]{do_carmo_riemannian_1992}.
}. By looking at $\|J_t\|$, we are only able to detect
conjugate points and hence, with this experiment, we only get the bound on the injectivity radius for
$c\ge 1$. For $c>1$ the injectivity radius decreases below $1$ as seen in the
case $S_2$ with $\|J_{\tilde{t}}\|=0$ for $\tilde{t}\approx\pi/\sqrt{2}$.
\begin{figure}[t]
    \begin{center}
      \includegraphics[width=0.50\columnwidth,trim=30 20 10 10,clip=true]{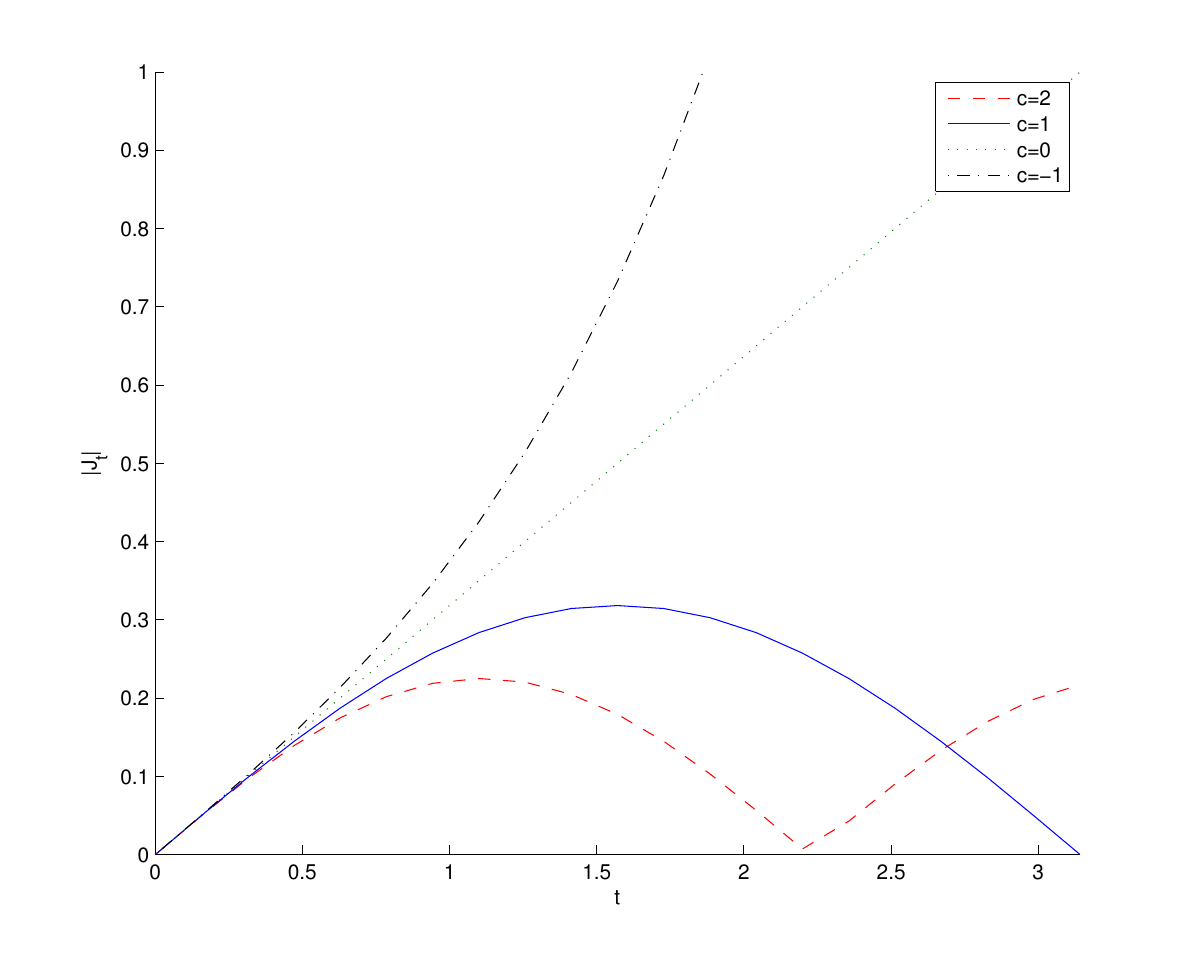}
    \end{center}
    \caption{$\|J_t\|$ for $c=2,1,0,-1$ when $J_0=0$, $\Df{J_0}{t}=(1,0,0)^T$, and $x_t=\Exp_pt(0,1,0)^T$.}
    \label{fig:conjugate}
\end{figure}

\section{Conclusion and Outlook}
Optimization problems over geodesics can be solved by constructing 
IVPs for numerical computation of Jacobi fields and second order differentials.
We use this to develop an algorithm for numerically computing exact Principal 
Geodesic Analysis and thereby eliminating the need for the traditionally
used linear approximations. In addition, the numerically computed Jacobi fields
allow injectivity radii bounds and estimation of sectional curvatures partially
solving an open problem stated in \cite{huckemann_intrinsic_2010}.

We use the developed algorithm to explore examples of manifold valued datasets where 
the principal subspaces computed by exact PGA differs from linearized PGA, and we show 
how the differences 
depend on the curvature of the manifolds and which 
formulation of PGA is used. In addition, we 
approximate sectional curvatures and bound injectivity radii and evaluate the
computed results.

We are currently extending the methods to work for quotient manifolds
$M/G$ and thereby allowing the similar computations to be performed on practically all
commonly occurring non-triangulated manifolds.
We expect this would allow Geodesic PCA to be computed on general 
quotient manifolds as well.
In addition, we are working on giving a theoretical treatment of the differences
between the variance and residual formulations of PGA. Finally, we
expect to use the automatic computation of sectional curvatures to investigate
further the effect of curvature on exact PGA and other 
statistical methods for manifold valued data.

\section*{Acknowledgements}
The authors would like to thank P. Thomas Fletcher for fruitful discussions on
the computation of exact PGA and Nicolas Courty for important remarks on
problems regarding data locality. 

\appendix

\section{Notation}
\label{app:notation}
In general, the paper follows the notation in \cite{do_carmo_riemannian_1992}.
Subscripts are used for curves on $M$ dependent on a parameter,
e.g. the curve $\alpha_t$ is a map $(-\epsilon,\epsilon)\rightarrow M$.
The subscript notation should 
not be confused with differentiation with respect to the parameter $t$.
When a local parametrization $\param{x}:U\subset\RR^\eta\rightarrow M$ is
available, it is often used to represent a
curve $\alpha_t$ so that $x_t=(x^1_t,\dots,x^\eta_t)$ is a curve in $U$ satisfying $\param{x}^{-1}\circ\alpha_t=x_t$.

The derivative $\df{\alpha_t}{t}$ of the curve $\alpha_t$
evaluated at $\tilde{t}$ belongs to the tangent space $T_{\alpha_{\tilde{t}}}M$.
The shorthand $\df{\alpha_{\tilde{t}}}{t}$ will be used for such vectors, i.e.
$\df{\alpha_t}{t}|_{t=\tilde{t}}$. In addition, when differentiating curves with
respect to $t$, we often use the shorthand $\dot{\alpha}_t$. With these conventions,
$\df{\alpha_t}{t}|_{t=0}$, the initial velocity of the curve $\alpha_t$, will be written
$\dot{\alpha}_0$.

Let $\tm f$ denote the differential of a map $f:M\to N$ and write $\tm_p f$ for
the differential evaluated at $p\in M$. When bases 
for $T_pM$ and $T_{f(p)}M$ are specified, or when $M$ and $N$ are Euclidean spaces, 
$Df$ is used instead of $df$.
For maps on product manifolds, e.g. $(v,w)\mapsto
g(v,w):M\times
\tilde{M}\to N$, we will need to distinguish differentiation with respect
to one of the variables only. Letting one of the parameters have a fixed value
$w_0$, the differential of the restricted function $v\mapsto g(v,w_0)$ from $M$
to $N$
evaluated at $v_0$ is denoted $\tm^v_{(v_0,w_0)} g$. Similarly, if $V$ is a submanifold of $M$, the differential
of $f|_V: V\to N$ will be denoted $\tmp{v\in V}f$ and its evaluation at $v_0\in V$ will
be written $\tmp{v\in V}_{v_0}f$.

When defined, the inverse of the exponential map $\Exp_q$ is the
logarithm map denoted $\Log_q(\tilde{q})$. Subsets $\Exp_qB_r(0)$ of $M$ with
$B_r(0)$ being a ball in $T_qM$ and with the radius $r>0$ sufficiently small
are examples of neighborhoods of $q$ in which 
$\Log_q(\tilde{q})$ is defined. Whenever the $\Log$-map is used, 
we will restrict to such neighborhoods without explicitly mentioning it.

When $h:M\to\RR$ is a real valued function,the
gradient of $h$ with respect to the metric is denoted $\grad\,h$, i.e. 
$\grad\,h$ satisfies $d_ph(v) = \ip{\grad_p h,v}$ for all $v\in
T_pM$. Whenever a basis of $T_pM$ is specified, or when $M$ is Euclidean, we switch 
to the usual notation $\nabla h$. Similarly, the Hessian of
$h$ is defined by the relation $\text{Hessian}(h)X = \nabla_X \grad\,h$ for all
vector fields $X$ using the covariant derivative $\nabla_X$. Again, when a basis of $T_pM$ is specified, or when $M$ is
Euclidean, the usual notation $\Hess(h)$ will be used.

\section{Expressions for the Derivative ODEs}
\label{app:A}
Because we will work with curves on manifolds that are either embedded in a
Euclidean space or where local parametrizations are available, we can perform
the derivations needed for the differential systems in Euclidean
spaces: the embedding space $\RR^m$ for the implicit case, and the parameter
space $U\subset\RR^\eta$ when a parametrization $\param{x}:U\rightarrow M$ is
available. The tensors we construct below will be tensors on the Euclidean
spaces $\RR^\eta$ or
$\RR^m$; they will be used as a compact notational representations,
and we do not attempt to give them intrinsic geometric
interpretations. The tensors will be embedding or coordinate \emph{dependent}; this is
by construction, and the tensors are thereby inherently different from
intrinsic and coordinate independent tensors such as the curvature endomorphism.

The notation will as far as possible follow the tensor notation used in
\cite{do_carmo_riemannian_1992}; however, we again note that
we use the embedding or parametrization to define the tensors 
on Euclidean domains. We will use the common identification 
between tensors and multilinear maps, i.e.
the tensor $T:(\RR^k)^r\rightarrow\RR$ defines a map multilinear map
$\tilde{T}:(\RR^k)^{r-1}\rightarrow\RR^k$ by
$\ip{\tilde{T}(y_1,\ldots,y_{r-1}),y_r}=T(y_1,\ldots,y_r)$. We will not
distinguish between a tensor and its corresponding multilinear map, and hence,
in the above case, write $T$ for both maps. 

For $s$-dependent vector fields $v_{s,1},\ldots,v_{s,r}$ and tensor field $T_s$,
we will use the equality
\begin{equation}
    \begin{split}
    &\df{T_0(v_{0,1},\ldots,v_{0,r})}{s}
    \\
    &\quad=
    \left(\df{T_0}{s}\right)(v_{0,1},\ldots,v_{0,r})
    +
    T_0(\df{v_{0,1}}{s},\ldots,v_{0,r})
    +\cdots+
    T_0(y_{0,1},\ldots,\df{v_{0,r}}{s})
    \end{split}
    \label{eq:t-deriv}
\end{equation}
for the derivative with respect to $s$.
If $T_{x_s}$ is a composition of an $z$-dependent tensor field $T_z$ and an
$s$-dependent curve $x_s$, the derivative $\df{T_{x_s}}{s}$ equals the
covariant tensor derivative $\nabla_{\sdf{x_s}{s}}T_{x_s}$ \cite[Chap.  4]{do_carmo_riemannian_1992}.
Since we will only use tensors on Euclidean spaces, such tensor derivatives 
will consist of component-wise derivatives.

In the following, when a parametrization $\param{x}$ is available, we let $T_z^P$ be the $z$-dependent $3$-tensor on
$\RR^\eta$ defined by
\begin{equation*}
    T_z^P(v_1,v_2,v_3)
    =
    -\sum_{i,j,k}^\eta\Gamma_{ij}^k(z)v_1^iv_2^jv_3^k
\end{equation*}
such that the $k$th component of $T_{x_t}^P(\dot{x}_t,\dot{x}_t)$
equals the right hand side of \eqref{eq:param-geo}. Note that $T_z^p$ is symmetric
in the first two components since the Christoffel symbols are symmetric in $i$
and $j$. Similarly, in the implicit case, we let
the $z$-dependent $3$-tensor $T_z^{I,p}$ and $2$-tensor $T_z^{I,x}$ equal the right hand side of the $p$ and
$x$ parts of \eqref{sys:impl-geo}, respectively:
\begin{align*}
    &T_z^{I,p}(v_1,v_2)
    =-\left(\sum_{k=1}^n\mu^k(z,v_1)\Hess_{z}(F^k)\right)v_2\ ,\\
    &T_z^{I,x}(v)
    =
    \left(I-D_{z}F^\dagger D_{z}F\right)v
    \ .
\end{align*}

The derivation below of \eqref{sys:JI} concerns the implicit case.
\\\ \\
    To derive $F_{q,v}^I$, we let $x_{t,s}$ be a family of geodesics with
    $x_{t,0}=x_t$, and define
    $q_s=x_{0,s}$ and $v_s=\dot{x}_{0,s}$. Assuming $\df{q_0}{s}=u$ and $\df{v_0}{s}=w$,
    the Jacobi field $J_t$ equals $\df{\Exp_{q_s}(tv_s)}{s}|_{s=0}$, 
    and, therefore, we can obtain $J_t$ by differentiating the 
    geodesic system \eqref{sys:impl-geo}. Since $M$ is embedded in $\RR^m$, we
    consider all curves and vectors to be elements of $\RR^m$.

    We use the map $\mu$ of section~\ref{sec:geodesic-systems}
    to define the tensors
    \begin{align*}
        &T_z^\mu(v)=\mu(z,v)\ ,\ 
        T_z^H(v_1,v_2)=-\left(\sum_{k=1}^nv_1^k\Hess_{z}(F^k)\right)v_2 \ , \\
        &T_z^D(v)=\left(D_zF\right)v\ ,\mbox{ and }
        T_z^{D^\dagger}(v)=\left(D_zF\right)^\dagger v .
    \end{align*}
    Note, in particular, that $T_z^{I,p}(v_1,v_2)=T_z^H(T_z^\mu(v_1),v_2)$.
    In addition, we will use the
    notation $\Lambda(A,B)$ for the right hand side of equation
    \eqref{eq:decell-expr}
    so that the derivative of a generalized inverse can be written
    $\df{(A_s^\dagger)}{s}=\Lambda(A_s,\df{A_s}{s})$.
    We claim that $\df{\Exp_{q_s}(tv_s)}{s}|_{s=0}$ equals the $z$-part of the solution of
    \eqref{sys:JI} with
    \begin{equation}
        \begin{split}
            &F_{q,v}^I\left(t,
            \begin{pmatrix}
                y_t\\
                z_t
            \end{pmatrix}
            \right)
            \\
            &\quad
            =
            {\scriptstyle
            \begin{pmatrix}
                T_{x_{t}}^{I,p}(p_t,\dot{z}_t)
                +
                \nabla_{z_t}T_{x_t}^H(T_{x_t}^\mu(p_t),\dot{x}_t)
                +
                T_{x_{t}}^H(
                T^\mu_{x_{t}}(y_t)
                -\Lambda(T_{x_{t}}^D,\nabla_{z_t}T_{x_t}^D)^Tp_t
                ,
                \dot{x}_t)
                \ \\
                T_{x_{t}}^{I,x}(y_t)
                -\Lambda(T_{x_{t}}^D,\nabla_{z_t}T_{x_t}^D)T_{x_{t}}^D(p_t)
                -T_{x_{t}}^{D^\dagger}\nabla_{z_t}T_{x_t}^D(p_t)
            \end{pmatrix}
            }
            \ .
        \end{split}
        \label{eq:FI}
    \end{equation}
    Here $p_t=p_{t,0}$ where $p_{t,s}$ are the $p$-parts of the solutions to \eqref{sys:impl-geo} 
    with initial conditions $q_s$ and $v_s$. To justify the claim, we differentiate the system \eqref{sys:impl-geo}. 
    Using \eqref{eq:t-deriv}, we get
    \begin{align*}
        &\df{\df{p_{t,0}}{s}}{t} 
        =
        \df{\dot{p}_{t,0}}{s}
        =
        \df{T_{x_{t,0}}^{I,p}}{s}(p_{t,0},\dot{x}_{t,0})
        \\
        &\qquad\quad\enspace
        = 
        \nabla_{\sdf{x_{t,0}}{s}}T_{x_t}^H(T_{x_t}^\mu(p_t),\dot{x}_t)
        +
        T_{x_{t}}^H(\nabla_{\sdf{x_{t,0}}{s}}T_{x_t}^\mu(p_{t})
        +
        T_{x_t}^\mu(\df{p_{t,0}}{s}),\dot{x}_{t})
        \\
        &\qquad\qquad\enspace\,
        +
        T_{x_{t}}^{I,p}(p_{t},\df{\dot{x}_{t,0}}{s})
    \end{align*}
    and
    \begin{align*}
        &\df{\df{x_{t,0}}{s}}{t} 
        = \df{\dot{x}_{t,0}}{s}
        = 
        \df{T_{t,0}^{I,x}}{s}(p_{t,0})
        =
        \nabla_{\sdf{x_{t,0}}{s}}T_{x_t}^{I,x}(p_{t})
        +
        T_{x_{t}}^{I,x}(\df{p_{t,0}}{s})
    \ .
    \end{align*}
    Note that the tensor derivative $\nabla_{\sdf{x_{t,0}}{s}}T_{x_t}^H$
    consists of derivatives of $H_{x_t}(F^k)$. Both the derivatives
    $\nabla_{\sdf{x_{t,0}}{s}}T_{x_t}^\mu$ and 
    $\nabla_{\sdf{x_{t,0}}{s}}T_{x_t}^{I,x}$ involve derivatives of generalized
    inverses. Therefore, we apply \eqref{eq:decell-expr} to differentiate $T_{x_t}^\mu$
    and get that
    \begin{equation*}
        \nabla_{\sdf{x_{t,0}}{s}}T_{x_t}^\mu
        =
        -\Lambda(T_{x_{t}}^D,\nabla_{\sdf{x_{t,0}}{s}}T_{x_t}^D)^T
        \ .
    \end{equation*}
    The tensor derivative $\nabla_{\sdf{x_{t,0}}{s}}T_{x_t}^D$ consists of derivatives of $D_{x_{t,s}}F$.
    Similarly,
    \begin{equation*}
        \nabla_{\sdf{x_{t,0}}{s}}T_{x_t}^{I,x}
        =
        -\Lambda(T_{x_{t}}^D,\nabla_{\sdf{x_{t,0}}{s}}T_{x_t}^D)T_{x_{t}}^D
        -T_{x_{t}}^{D^\dagger}\nabla_{\sdf{x_{t,0}}{s}}T_{x_t}^D
        \ .
    \end{equation*}
    By differentiating the initial conditions, we get
    \eqref{sys:JI} with $y=\df{p_{t,0}}{s}$, $z=\df{x_{t,0}}{s}$, and
    $F_{q,v}^I$ as defined in \eqref{eq:FI}.

As noted, we can obtain an IVP in the parametrized
case using a similar procedure. Let $\alpha_t$ be a geodesic in the $C^3$ manifold $M$. We assume 
$\param{x}:U\rightarrow M$ is a local parametrization containing $\alpha_t$, and
we let $x_t$ be the curve in $U$ representing $\alpha_t$, i.e. $\param{x}^{-1}\circ\alpha_t=x_t$. 
Let $\alpha_0=q$ and $\dot{\alpha}_0=v$, and let $u,w$ be vectors in $T_qM$. We
associate $TM$ with $\RR^\eta$ using $\param{x}$.
The Jacobi field $J_t$ along $\alpha_t$ with $J_0=u$ and
$\Df{J_0}{t}=w$ can then be found as the $z$-part of the solution of the IVP
    \begin{equation}
        \begin{split}
            &
            \begin{pmatrix}
                \dot{y}_t\\
                \dot{z}_t
            \end{pmatrix}
            =
            F_{q,v}^P\left(t,
            \begin{pmatrix}
                y_t\\
                z_t
            \end{pmatrix}
            \right)
            \ ,\\
            &
            \begin{pmatrix}
                y_0\\
                z_0\\
            \end{pmatrix}
            =
            \begin{pmatrix}
                w\\
                u
            \end{pmatrix}
            \ ,
        \end{split}
        \label{sys:JP}
    \end{equation}
    with $F_{q,v}^P$ the map constructed below.

    To derive $F_{q,v}^P$, we let $\alpha_{t,s}$ be a family of geodesics with
    $\alpha_{t,0}=\alpha_t$, and define
    $q_s=\alpha_{0,s}$ and $v_s=\dot{\alpha}_{0,s}$. Let $x_{t,s}$ represent
    $\alpha_{t,s}$ using $\param{x}$.
    Again assuming $\df{q_0}{s}=u$ and $\df{v_0}{s}=w$,
    we can obtain $J_t$ by differentiating the 
    geodesic system \eqref{eq:param-geo}. Using \eqref{eq:t-deriv} and symmetry of
    $T_z^P$, we have
    \begin{equation}
        \begin{split}
            &\df{\df{x_{t,0}}{s}}{t^2}
            =
            \df{\ddot{x}_{t,0}}{s}
            =
            \df{T_{x_{t,0}}^P(\dot{x}_{t,0},\dot{x}_{t,0}}{s})
            \\
            &\qquad\qquad
            =
            \nabla_{\sdf{x_{t,0}}{s}}T_{x_t}^P(\dot{x}_t,\dot{x}_t)
            +2T_{x_{t,0}}^P(\df{\df{x_{t,0}}{s}}{t},\dot{x}_t)\ ,\\
            &\df{x_{0,0}}{s}=u,\ \df{\df{x_{0,0}}{s}}{t}=w
        \end{split}
    \end{equation}
    because $x_{t,s}$ are solutions to \eqref{eq:param-geo} with initial conditions
    $q_s$ and $v_s$. Therefore, setting 
    $y_t=\df{\df{x_{t,0}}{s}}{t}$
    and 
    $z_t=\df{x_{t,0}}{s}$,
    we get \eqref{sys:JP} with
    \begin{align*}
        &F_{q,v}^P(t,
            \begin{pmatrix}
                y_t\\
                z_t
            \end{pmatrix}
        )
        =
        \begin{pmatrix}
            \nabla_{z_t}T_{x_t}^P(\dot{x}_t,\dot{x}_t)
            +2T_{x_t}^P(y_t,\dot{x}_t)
            \\
            y_t
        \end{pmatrix}
        \ .
    \end{align*}
    As noted above, the derivative $\nabla_{\sdf{x_{t,0}}{s}}T_{x_s}^P$ consists of 
    just the component-wise
    derivatives of $T_z^P$, i.e. the derivatives of the Christoffel symbols.

\ \\
For deriving the second order differentials, we will 
need second order derivatives of generalized inverses.
Let $A_{t,s}$ be an $s$- and $t$-dependent matrix of full rank.
From repeated application of the product rule and
\eqref{eq:decell-expr}, we see that 
when the $s$- and $t$-dependent matrices $A_{t,s}$ and $A_{t,s}^\dagger$ are 
differentiable with respect to both variables
and the mixed partial derivative $\pfcc{A_{t,s}}{s}{t}$ exists then
$\tfrac{\partial^2}{\partial s\partial t}(A_{t,s}^\dagger)
=
\tilde{\Lambda}(A_{t,s},\pfc{A_{t,s}}{t},\pfc{A_{t,s}}{s},\pfcc{A_{t,s}}{s}{t})$
where
\begin{equation}
   \begin{split}
    &\tilde{\Lambda}(A,B,C,D)
    =
    -\Lambda(A,C)BA^\dagger
    -A^\dagger DA^\dagger
    -A^\dagger B\Lambda(A,C)
    -\Big(\Lambda(A,C)A
    +A^\dagger C\Big)B^T(A^\dagger)^TA^\dagger
    \\
    &\qquad\qquad\qquad\quad
    +(I-A^\dagger A)
     \Big(D^T(A^\dagger)^TA^\dagger
     +B^T\big(\Lambda(A,C)^TA^\dagger
     +(A^\dagger)^T\Lambda(A,C)\big)\Big)
   \ .
   \end{split}
     \label{eq:second-deriv}
\end{equation}

We start the derivation with the parameterized
case. We will use the tensors introduced in the beginning of this section
and for the first order differentials.
\\\ \\
    We compute the $q$ and $r$
    parts of $G_{q,v_0,w,u}^P$ separately; denote them $G_{q,v_0,w,u}^{P,q}$ and 
    $G_{q,v_0,w,u}^{P,r}$, respectively. Let $(y_{t,s}^w,z_{t,s}^w)$ be solutions to 
    \eqref{sys:JP} with IV's
    $(w,0)^T$ and along the curves $x_{t,s}$ that represents the geodesics
    $\alpha_{t,s}$. In addition, let $y_t^w$ and $z_t^w$
    denote $y_{t,0}^w$ and $z_{t,0}^w$, respectively. Let also $(y_t^u,z_t^u)$ be
    solutions to $\eqref{sys:JP}$ with IV's $(u,0)^T$ along $x_t=x_{t,0}$.
    Differentiating system \eqref{sys:JP}, we get
        $$\df{\df{(z_{t,0}^w)}{s}}{t}
        =
        \df{(\dot{z}_{t,0}^w)}{s}
        =\df{(y_{t,0}^w)}{s}$$
    and, using symmetry of the tensors,
    \begin{equation}
    \begin{split}
        &\df{\df{(y_{t,0}^w)}{s}}{t}
        =
        \df{(\dot{y}_{t,0}^w)}{s}
        =
        \df{\nabla_{z_{t,0}^w}T_{x_{t,0}}^P(\dot{x}_{t,0},\dot{x}_{t,0})}{s}
        +2\df{T_{x_{t,0}}^P(y_{t,0}^w,\dot{x}_{t,0})}{s}
        \\
        &\qquad
        =
        \nabla_{z_t^u}\nabla_{z_{t}^w}T_{x_t}^P(\dot{x}_{t},\dot{x}_{t})
        +\nabla_{\sdf{z_{t,0}^w}{s}}T_{x_{t}}^P(\dot{x}_{t},\dot{x}_{t})
        +2\nabla_{z_{t}^w}T_{x_{t}}^P(y_t^u,\dot{x}_{t})
        \\
        &\qquad\quad\,
        +2\nabla_{z_t^u}T_{x_t}^P(y_{t}^w,\dot{x}_{t})
        +2T_{x_t}^P(\df{y_{t,0}^w}{s},\dot{x}_{t})
        +2T_{x_t}^P(y_{t}^w,y_t^u)
        \ .
    \end{split}
    \label{eq:GPq}
    \end{equation}
    Therefore, letting $q_t=\df{y_{t,0}^w}{s}$ and $r_t=\df{z_{t,0}^w}{s}$, we get
    $G_{q,v_0,w,u}^{P,q}(t,(r_t\ q_t)^T)$ 
    as the right hand side of \eqref{eq:GPq}
    and $G_{q,v_0,w,u}^{P,r}(t,(r_t\ q_t)^T)$ equal to $q_t$.
    The initial values are both $0$ since $y_{0,s}^w$ and
    $z_{0,s}^w$ equal $0$ and $w$, respectively, and, therefore, are not $s$-dependent.
    
    For the implicit case, we will again compute the $r$ and $q$
    parts of $G_{q,v_0,w,u}^I$ separately. Let now $(y_{t,s}^w,z_{t,s}^w)$ 
    be solutions to \eqref{sys:JI} along the geodesics $x_{t,s}$ and with IV's
    $(w,0)^T$, and let $(y_t^u,z_t^u)$ be
    solutions to $\eqref{sys:JI}$ along $x_t$ and with IV's $(u,0)^T$.
    Let also $p_{t,s}$ denote the $p$-parts of the solutions to \eqref{sys:impl-geo} with
    initial conditions $q$ and $v_s$, and write $p_t=p_{t,0}$, $y_t^w=y_{t,0}^w$,
    and $z_t^w=z_{t,0}^w$. Recall that all curves and vectors are considered
    elements of the embedding space $\RR^m$.

    Differentiating system \eqref{sys:JI}, we get
    \begin{align*}
        &\df{\df{y_{t,0}^w}{s}}{t}
        =
        \df{\dot{y}_{t,0}^w}{s}
        =
        \df{T_{x_{t,0}}^{I,p}(p_{t,0},\dot{z}^w_{t,0})}{s}
                +
                \df{\nabla_{z^w_{t,0}}T_{x_{t,0}}^H(T_{x_{t,0}}^\mu(p_{t,0}),\dot{x}_{t,0})}{s}
        \\
        &\qquad\qquad\qquad\qquad\enspace
                +
                \df{T_{x_{t,0}}^H(
                T^\mu_{x_{t,0}}(y^w_{t,0})
                -\Lambda(T_{x_{t,0}}^D,\nabla_{z^w_{t,0}}T_{x_{t,0}}^D)^Tp_{t,0}
                ,
                \dot{x}_{t,0})}{s}
        \ .
    \end{align*}
    Using the map $\tilde{\Lambda}$ defined in \eqref{eq:second-deriv}, we have
    \begin{equation*}
        \df{\Lambda(T_{x_{t,0}}^D,\nabla_{z^w_{t,0}}T_{x_{t,0}}^D)}{s}^T
        =
        \tilde{\Lambda}(T_{x_{t}}^D,
        \nabla_{z^w_{t}}T_{x_{t}}^D,
        \nabla_{z^u_{t}}T_{x_{t}}^D,
        \nabla_{z^u_t}\nabla_{z^w_{t}}T_{x_{t}}^D
        )^T \ .
    \end{equation*}
    Combining the equations, we get
    \begin{align*}
        \df{\df{y_{t,0}^w}{s}}{t}
        &=
        \nabla_{z^u_t}T_{x_{t}}^{I,p}(p_{t},\dot{z}^w_{t})
        +T_{x_{t}}^{I,p}(y^u_t,\dot{z}^w_{t})
        +T_{x_{t}}^{I,p}(p_{t},\df{\df{z^w_{t,0}}{s}}{t})
        \\
        &\quad\,
        +\nabla_{z^u_t}\nabla_{z^w_{t}}T_{x_{t}}^H(T_{x_{t}}^\mu(p_{t}),\dot{x}_{t})
        +\nabla_{\df{z^w_{t,0}}{s}}T_{x_{t}}^H(T_{x_{t}}^\mu(p_{t}),\dot{x}_{t})
        \\
        &\quad\,
        +\nabla_{z^w_{t}}T_{x_{t}}^H(
                T^\mu_{x_{t}}(y^u_t)
                -\Lambda(T_{x_{t}}^D,\nabla_{z^u_t}T_{x_t}^D)^Tp_t
        ,\dot{x}_{t})
        +\nabla_{z^w_{t}}T_{x_{t}}^H(T_{x_{t}}^\mu(p_{t}),\dot{z}^u_{t})
        \\
        &\quad\,
        +\nabla_{z^u_t}T_{x_{t}}^H(
        T^\mu_{x_{t}}(y^w_{t})
        -\Lambda(T_{x_{t}}^D,\nabla_{z^w_{t}}T_{x_{t}}^D)^Tp_{t}
        ,
        \dot{x}_{t})
        \\
        &\quad\,
        +T_{x_{t}}^H(
        T^\mu_{x_{t}}(\df{y^w_{t,0}}{s})
        -\Lambda(T_{x_{t}}^D,\nabla_{z^u_{t}}T_{x_{t}}^D)^Ty^w_{t}
        ,
        \dot{x}_{t})
        \\
        &\quad\,
        -T_{x_{t}}^H(
        \tilde{\Lambda}(T_{x_{t}}^D,
        \nabla_{z^w_{t}}T_{x_{t}}^D,
        \nabla_{z^u_{t}}T_{x_{t}}^D,
        \nabla_{z^u_t}\nabla_{z^w_{t}}T_{x_{t}}^D
        )^Tp_t
        +
        \Lambda(T_{x_{t}}^D,\nabla_{z^u_{t}}T_{x_{t}}^D)^Ty^u_{t}
        ,
        \dot{x}_{t})
        \\
        &\quad\,
        +T_{x_{t}}^H(
        T^\mu_{x_{t}}(y^w_{t})
        -\Lambda(T_{x_{t}}^D,\nabla_{z^w_{t}}T_{x_{t}}^D)^Tp_{t}
        ,
        \dot{z}^u_{t})
        \ .
    \end{align*}
    Substituting $\df{z_{t,0}^w}{s}$ with $r_t$ and $\df{y_{t,0}^w}{s}$ with $q_t$, we get
    $G_{q,v_0,w,u}^{I,q}$ as the right hand side of the equation.
    Likewise,
    \begin{align*}
        \df{\df{z_{t,0}^w}{s}}{t}
        &=
        \df{T_{x_{t,0}}^{I,x}(y^w_{t,0})}{s}
        -\df{\Lambda(T_{x_{t,0}}^D,\nabla_{z^w_{t,0}}T_{x_{t,0}}^D)T_{x_{t,0}}^D(p_{t,0})}{s}
        -\df{T_{x_{t,0}}^{D^\dagger}\nabla_{z^w_{t,0}}T_{x_{t,0}}^D(p_{t,0})}{s}
        \\
        &=
        \nabla_{z^u_t}T_{x_{t}}^{I,x}(y^w_t)
        +T_{x_{t}}^{I,x}(\df{y^w_{t,0}}{s})
        \\
        &
        -\tilde{\Lambda}(T_{x_{t}}^D,\nabla_{z^w_{t}}T_{x_{t}}^D,
        \nabla_{z^u_{t}}T_{x_{t}}^D,
        \nabla_{z^u_{t}}\nabla_{z^w_{t}}T_{x_{t}}^D)T_{x_{t}}^D(p_{t})
        \\
        &
        -\Lambda(T_{x_{t}}^D,\nabla_{z^w_{t}}T_{x_{t}}^D)\nabla_{z^u_t}T_{x_{t}}^D(p_{t})
        -\Lambda(T_{x_{t}}^D,\nabla_{z^w_{t}}T_{x_{t}}^D)T_{x_{t}}^D(y^u_{t})
        \\
        &
        -\Lambda(T_{x_{t}}^D,\nabla_{z^u_t}T_{x_{t}}^D)\nabla_{z^w_{t}}T_{x_{t}}^D(p_{t})
        -T_{x_{t}}^{D^\dagger}\nabla_{z^u_t}\nabla_{z^w_{t}}T_{x_{t}}^D(p_{t})
        -T_{x_{t}}^{D^\dagger}\nabla_{z^w_{t}}T_{x_{t}}^D(y^u_{t})
        \ .
    \end{align*}
    Again, after substituting $\df{y_{t,0}^w}{s}$ with $q_t$ as above, we get $G_{q,v_0,w,u}^{I,r}$ as the
    right hand side of the equation. As for the parametric case, both initial
    values are zero.

\section{The Projection Differential}
\label{app:B}
For the proof of Proposition~\ref{prop:grad-proj}, we will need the following result 
to show that equation \eqref{eq:grad-proj} is independent of the chosen basis.
\begin{lemma}
    Let $S$ be an open subset of $\RR^k$ and $U:S\rightarrow M^{k\times(k-1)}$ a $C^1$ map with the property that for any
    $v\in S$, the columns of the matrix $(\tfrac{v}{\|v\|}\ U(v))$ constitute an orthonormal basis
    for $\RR^k$. Let $u_v^j$ denote the $j$th column of $U(v)$. Then for any
    $v_0\in S$ and $w\in\RR^k$,
    $\ip{\df{u_{v_0+tw}^j|_{t=0}}{t},v_0}=-\ip{u_{v_0}^j,w}$. As consequence of
    this, if $\tilde{U}:S\rightarrow\RR^{k-1}$ denotes the map $v\mapsto U(v)^T\tfrac{v_0}{\|v_0\|}$
    then
    \begin{equation*}
        D^{v\in\Span(u_{v_0}^1,\ldots,u_{v_0}^{k-1})}_{v_0}\tilde{U}(v)
        =
        -I_{k-1}
    \end{equation*}
    in the basis $u_{v_0}^1,\ldots,u_{v_0}^{k-1}$ for
    $\Span(u_{v_0}^1,\ldots,u_{v_0}^{k-1})$.
    
    \label{lem:orth-deriv}
\end{lemma}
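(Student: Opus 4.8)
The plan is to derive everything from one pointwise identity. For every $v\in S$ and every index $j$ the hypothesis that $(\tfrac{v}{\|v\|}\ U(v))$ is orthonormal gives in particular $\langle u^j_v,\,v\rangle=0$. Fixing $v_0\in S$ and $w\in\RR^k$, the segment $t\mapsto v_0+tw$ lies in $S$ for small $t$ since $S$ is open, and $t\mapsto u^j_{v_0+tw}$ is $C^1$; differentiating $\langle u^j_{v_0+tw},\,v_0+tw\rangle=0$ at $t=0$ and using the product rule yields
\[
  \Big\langle \tfrac{d}{dt}\big|_{t=0}u^j_{v_0+tw},\,v_0\Big\rangle+\langle u^j_{v_0},\,w\rangle=0,
\]
which is exactly the first assertion. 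This single differentiation is essentially the whole content of the lemma.

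For the stated consequence, the $j$-th component of $\tilde U(v)$ is $\langle u^j_v,\,v_0\rangle/\|v_0\|$, with $v_0$ held fixed while $v$ varies. Differentiating this in the direction $w$ and substituting the identity just obtained shows that $D_{v_0}\tilde U$ sends $w$ to the vector whose $j$-th component is $-\langle u^j_{v_0},\,w\rangle/\|v_0\|$. I then restrict $w$ to $\Span(u^1_{v_0},\dots,u^{k-1}_{v_0})$ and evaluate on the basis vectors $w=u^l_{v_0}$: orthonormality of the $u^j_{v_0}$ collapses the result to the matrix with entries $-\delta_{jl}/\|v_0\|$, i.e. $-I_{k-1}$ once $v_0$ is a unit vector — which is the only situation in which the lemma is applied (in Algorithm~\ref{alg:pga} the generating vector is renormalized at every iteration, and in \thmref{grad-proj} the relevant orthonormal basis is $\{v^1,\dots,v^k,v_0/\|v_0\|\}$).

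I do not expect a genuine obstacle. The $C^1$ hypothesis is precisely what legitimizes the one differentiation, and the only points to watch are notational: keeping $v_0$ frozen inside $\tilde U$ while letting $v$ move, and correctly reading off the matrix of a linear map whose domain $\Span(u^1_{v_0},\dots,u^{k-1}_{v_0})$ carries the non-standard basis $\{u^j_{v_0}\}$ rather than coordinates inherited from $\RR^k$. It is worth remarking that the first identity expresses that $DU$ moves the frame in the antisymmetric fashion one expects of a moving orthonormal frame, which is the structural reason the restricted differential comes out to minus the identity.
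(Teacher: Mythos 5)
Your proof is correct, and it is the natural (essentially only) argument; note that the paper states this lemma without supplying a proof, so there is nothing to compare against---differentiating the orthogonality relation $\ip{u^j_v,v}=0$ along $t\mapsto v_0+tw$ is exactly what is required, and the second assertion follows by evaluating on the basis vectors $u^l_{v_0}$ as you do. Your remark about normalization is also accurate: as literally stated the restricted differential of $\tilde U$ comes out to $-\tfrac{1}{\|v_0\|}I_{k-1}$, and the displayed identity $-I_{k-1}$ is valid only because $v_0$ is a unit vector in every place the lemma is invoked (the PGA optimization is over the unit sphere of $T_\mu M$, and Algorithm~\ref{alg:pga} renormalizes $v$ at each iteration), so this is a harmless imprecision in the lemma's statement rather than a gap in your argument.
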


\bibliographystyle{myamsplain}
\bibliography{bibliography.bib}

\end{document}